\DeclareMathOperator*{\argmin}{argmin}
\newcommand{\rbnote}[1]{}
\newcommand{\bmnote}[1]{}
\newcommand{\antiunify}{\textsc{Anti-unify}}
\newcommand{\cntr}{c}               %
\newcommand{\coax}{\langle\!\langle s\rangle\!\rangle}
\newcommand{\dom}{\operatorname{dom}}
\newcommand{\expr}{\varepsilon}   %
\newcommand{\exprmap}{\lambda}
\newcommand{\exprs}{\mathcal{E}}   %
\newcommand{\id}{\operatorname{id}}
\newcommand{\ite}{\operatorname{ite}}
\newcommand{\variables}{\mathbb{V}} %
\renewcommand{\state}{x} %
\newcommand{\states}{X}
\newcommand{\statevars}{\mathbb{X}}
\newcommand{\action}{a}
\newcommand{\actions}{A}
\newcommand{\game}{\mathcal{G}}
\newcommand{\Holes}{\mathcal{H}}
\newcommand{\inputs}{I}
\newcommand{\inputval}{\ensuremath{\iota}}
\newcommand{\inputvars}{\mathbb{I}}
\newcommand{\invpred}{\Phi_{\invariant}}
\newcommand{\newp}{\operatorname{new}}
\newcommand{\params}{P}
\newcommand{\paramval}{p}
\newcommand{\paramvars}{\mathbb{P}}
\newcommand{\init}{init}
\newcommand{\envtrans}{\varsigma}
\newcommand{\nrass}{m}
\newcommand{\nrgua}{n}
\newcommand{\rankpred}{\Phi_{\invariant, (l,r)}}
\newcommand{\systrans}{\sigma}
\newcommand{\vars}{\operatorname{vars}}
\newcommand{\toolname}{Prasanva\xspace}
\newcommand{\wincond}{\phi}
\newcommand{\justiceguarantee}{\textit{Js}}
\newcommand{\justiceassumption}{\textit{Je}}
\newcommand{\false}{\ensuremath{\operatorname{False}}}
\newcommand{\true}{\ensuremath{\operatorname{True}}}
\newcommand{\var}[1]{\ensuremath{\mathbf{#1}}}
\renewcommand{\wp}{\operatorname{wp}}
\newcommand{\always}{\Box}
\newcommand{\eventually}{\Diamond} %
\newcommand{\infoften}{\always\eventually}
\newcommand{\invariant}{Inv}
\newcommand{\children}{\operatorname{children}}
\newcommand{\E}{E}    %
\newcommand{\synt}{\operatorname{synt}}
\newcommand{\F}{\mathcal{F}}
\newcommand{\leaves}{\operatorname{leaves}}
\newcommand{\nodes}{\operatorname{nodes}}
\newcommand{\roots}{\operatorname{roots}}
\newcommand{\tree}{\operatorname{tree}}
\newcommand{\clustersimilar}{\textsc{Clustersimilar}\xspace}
\newcommand{\generalizedissimilar}{\textsc{GeneralizeDissimilar}\xspace}
\newcommand{\generalizesimilar}{\textsc{GeneralizeSimilar}\xspace}
\newcommand{\merge}{\textsc{Merge}\xspace}
    \newtheorem{assumption}[theorem]{Assumption}
\titleformat*{\paragraph}{\bfseries}
\begin{document}

\title{Parameterized Infinite-State Reactive Synthesis}

\author{Benedikt Maderbacher}
\email{benedikt.maderbacher@tugraz.at}
\author{Roderick Bloem}
\email{roderick.bloem@tugraz.at}
\affiliation{%
  \institution{Graz University of Technology}
  \city{Graz}
  \country{Austria}
 }

\renewcommand{\shortauthors}{B. Maderbacher and R. Bloem}

\ccsdesc[500]{Software and its engineering~Formal methods} %

\keywords{Reactive Synthesis, Parameterized Synthesis, Infinite-State Synthesis, Generalized Reactivity(1)} %

\thanks{This project has received funding from the Austrian research promotion agency FFG under project FATE (\textnumero~894789).}

\begin{abstract}

We propose a method to synthesize a parameterized infinite-state systems that can be instantiated for different parameter values.
The specification is given in a parameterized temporal logic that allows for data variables as well as parameter variables that encode properties of the environment.
Our synthesis method runs in a counterexample-guided loop consisting of four main steps: First, we use existing techniques to synthesize concrete systems for some small parameter instantiations. Second, we generalize the concrete systems into a parameterized program. Third, we create a proof candidate consisting of an invariant and a ranking function. Fourth, we check the proof candidate for consistency with the program.
If the proof succeeds, the parameterized program is valid. 
Otherwise, we identify a parameter value for which the proof fails and add a new concrete instance to step one.
To generalize programs and create proof candidates, we use a combination of anti-unification and syntax-guided synthesis to express syntactic differences between programs as functions of the parameters.

We evaluate our approach on examples from the literature that have been extended with parameters as well as new problems.
\end{abstract}

\maketitle              %

\section{Introduction}

Reactive synthesis on infinite-state domains like integers or reals can be  used in settings such as cyber-physical systems that operate on complex data.
These systems are often deployed in different environments. We propose to encode the differences in the environment as parameters. We want to construct a single \emph{parameterized} system that can be used in all environments.
By synthesizing a parameterized system, we eliminate the need to create an infinite number of instances, one for each parameter value. Unlike classical reactive synthesis, where the system's size is heavily influenced by the specification, our approach constructs systems whose size is independent of the parameter values.

\begin{example} \label{ex:runningintro}
A robot with location $x \in \mathbb{Z}$ can move on a track between a loading zone (where $x = min$) and an unloading zone ($x = max$). It is affected by wind disturbances that can push it $d$ steps  in either direction.
The robot must visit  the loading and the unloading zone infinitely often.
We want a single parameterized controller that works for any value of the parameters $min$ and $max$ with $min + 1 < max$. The specification written using the linear temporal operators $\always$ for globally and $\eventually$ for eventually is included in \cref{lst:spec}.

It contains  a set of actions that  consist of a guard that states when the action may be executed and an update that defines how the state variables are changed. A globally guarantee is syntactic sugar for a condition that is added to every guard.
In this example there are three \textbf{actions}:
    \begin{itemize}
        \item $(min \leq x \leq max,\; x' = x + d)$,
        \item $(min \leq x \leq max,\; x' = x + 1 + d)$, and
        \item $(min \leq x \leq max,\; x' = x - 1 + d)$.
    \end{itemize}

The robot can stop, move right, or move left, while  wind disturbance $d$  influences the position.
The guards specify that actions can only be executed if the robot is on the track.

The \textbf{specification} $\phi$ consists of the environment transition relation $-1 \leq d \leq 1$ and the temporal specification $(\infoften d<0 \land \infoften 0<d) \Rightarrow (\infoften x = min \land \infoften x = max)$, which includes the assumption that the wind does not always blow from the same direction, as this would make it impossible to reach one of the goals.
The system must work for all initial states $min \leq x \leq max$.

The  main loop of a parameterized program that satisfies $\phi$ is shown in \cref{lst:examplesystem}.
It is executed in an infinite loop, once in  every time step.
The program uses an additional variable $\cntr$ to track the goal that is currently being targeted.
The first part of the program changes the goal if the current goal has been reached.
The program has a subprogram for each goal.
If $\cntr = 1$, the program tries to decrease $x$ until it becomes equal to $min$. The subprogram may fail to decrease $x$ if there is a head wind and must be careful not to overshoot $min$ if there is a tail wind. The case of $\cntr = 2$, where the program tries to increase $x$, is analogous. As intended, this program fulfills the specification, regardless of which concrete values are chosen for $min$ and $max$.
\end{example}

\begin{figure}[t]
\begin{minipage}{.50\textwidth}
\begin{lstlisting}[frame=tlrb, language=C, caption=Specification, label=lst:spec, mathescape=true]
Parameters: $min, max \in \mathbb{Z}$
State: $x \in \mathbb{Z}$
Input: $d \in \mathbb{Z}$
Actions:
    $\true$,  $x$ := $x - 1 + d$
    $\true$,  $x$ := $x + d$
    $\true$,  $x$ := $x + 1 + d$
assume
    $\always\; -1 \leq d \leq 1 \;\land$
    $\infoften\; d<0 \land \infoften\; 0<d$
guarantee
    $\always\; min \leq x \leq max \;\land$
    $\infoften\; x = min \land \infoften\; x = max$
\end{lstlisting}
\end{minipage}\hfill
\begin{minipage}{.45\textwidth}
\begin{lstlisting}[frame=tlrb, language=C, caption=System loop body, label=lst:examplesystem, mathescape=true]
if $(c = 1 \land x = min) \;\lor$ 
   $(c = 2 \land x = max)$:
    $c$ := $c \oplus 1$
if $c = 1$:
    if $x + d \geq min + 1$:
        $x$ := $x - 1 + d$
    else:
        $x$ := $x + d$ 
if $c = 2$:
    if $x + d \leq max - 1$:
        $x$ := $x + 1 + d$
    else:
        $x$ := $x + d$
\end{lstlisting}
\end{minipage}
\end{figure}

\paragraph{Summary} We propose a novel method to synthesize an infinite-state parameterized system that fulfills a parameterized Generalized Reactivity(1) or GR(1) specification (an extension of propositional GR(1)\cite{DBLP:journals/jcss/BloemJPPS12}). 
The method consists of a counterexample-driven synthesis loop  (\cref{fig:overview}). At  every iteration of the loop, we have a finite set of parameter values $P'$. We construct a concrete synthesis problem for each $p \in P'$, and pass them to an existing concrete synthesis method, which returns a concrete system and an invariant for each $p$. 
We use the concrete solutions as input to a generalization procedure. This procedure constructs a generalized program and a generalized candidate invariant that may both include parameter variables. We also construct a candidate ranking function for the generalized program that we use to prove progress towards the goals of the GR(1) specification.

By construction, the generalized program and invariant encompass the behaviors of all  concrete programs. 
We verify whether the generalized system is consistent with the generalized candidate invariant and ranking function, which means that the system is correct for all parameter values.
If not, we obtain a new parameter value $p$ for which the program does not match the proof candidate and we add $p$ to $P'$ and start the next iteration. 

There are several alternative executions. First, the concrete synthesis algorithm may find a counterexample to realizability for $p$, which means that the parameterized specification is  unrealizable. Second, the proof candidate may succeed in showing that the generalized program is correct, and synthesis succeeds. Third, because parameterized synthesis is undecidable, the loop may not terminate. Note that the concrete synthesis problems may or may not be be finite state. For instance, the parameters may bound the range of an integer, which means that any concrete system is finite state. If, instead, the parameters bound a real-valued variable, a concrete system is infinite. In the latter case, the concrete synthesis problem is also undecidable and may not terminate. 

The parameterized synthesis problem is equivalent to infinite state synthesis: Parameterized synthesis can be reduced to infinite state synthesis by defining the parameter as an additional input that does not change during execution, and infinite state synthesis is simply parameterized synthesis using an empty set of parameters. However, existing infinite state synthesis methods are based on Boolean abstraction or symbolic fixpoint computations, rather than generalization. We will show in the experimental results that our approach performs much better on parameterized instances.

We present the following contributions.%
\begin{itemize}
    \item A formulation of the parameterized synthesis problem for GR(1) specifications.
    \item A proof that parameterized synthesis is undecidable, even if the synthesis problem for concrete instances is decidable.
    \item A counterexample-driven synthesis loop for parameterized specifications.
    \item A generalization procedure based on anti-unification and syntax-guided synthesis for programs and invariants.
    \item A method to construct ranking functions for parameterized programs.
    \item An algorithm for (the decidable problem) of checking whether a generalized program is consistent with a proof candidate.
    \item A prototype implementation called \toolname and an experimental evaluation.
\end{itemize}
\begin{figure}
    \centering
    \includegraphics[width=\linewidth]{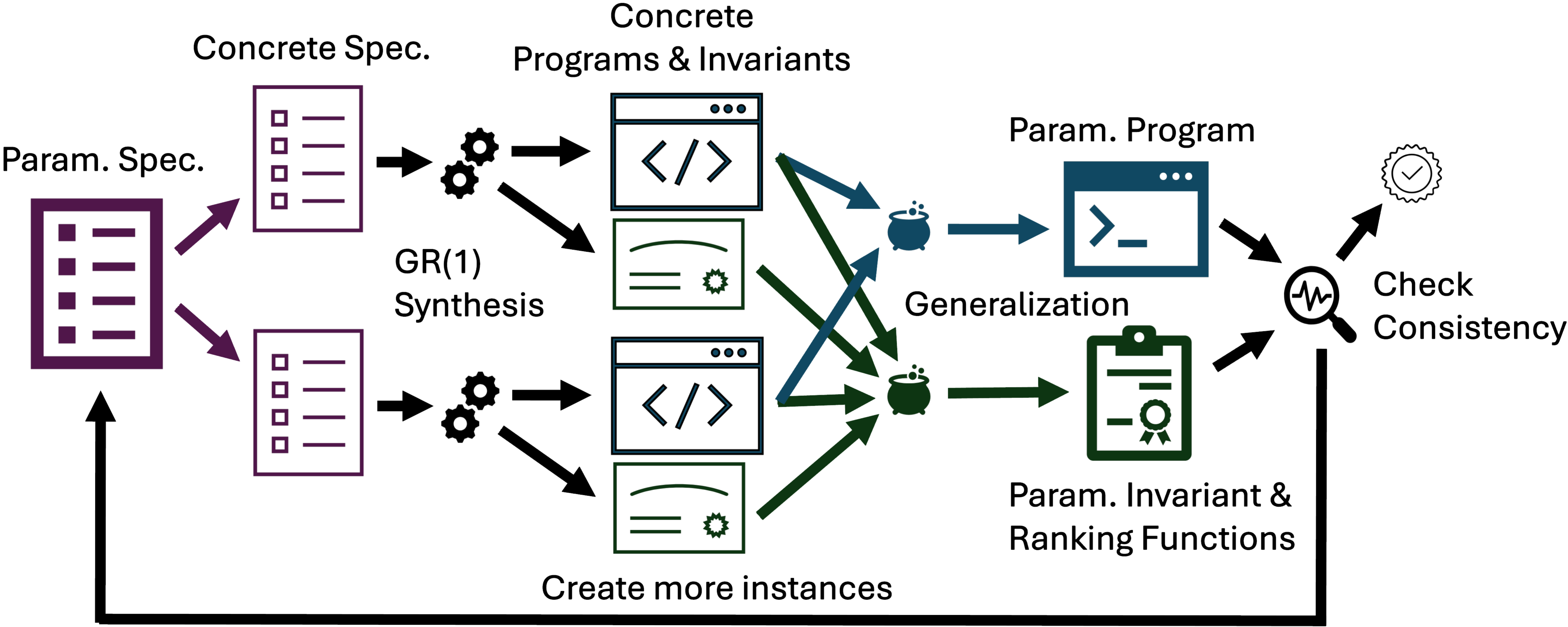}
    \caption{Overview of the counterexample-driven synthesis loop.}
    \label{fig:overview}
\end{figure}

\section{Problem Statement}
\label{sec:problem_statement}
We will now define our problem formally. A parameterized synthesis problem has a finite set $\variables$ of variables partitioned into parameter variables $\paramvars$, state variables $\statevars$, and input variables $\inputvars$. All state variables $\statevars$ are observable by the environment and we thus do not define an additional set of output variables.

Parameter variables are always integer. State and input variables can be Boolean, integer, or real.
When the distinction is relevant we use $\statevars_{\mathbb{B}}$/$\inputvars_{\mathbb{B}}$, $\statevars_{\mathbb{Z}}$/$\inputvars_{\mathbb{Z}}$, and $\statevars_{\mathbb{R}}$/$\inputvars_{\mathbb{R}}$ for Boolean, integer, and real state/input variables respectively. 
\begin{assumption}\label{ass:decidablelogic}
We assume that $\paramvars$ is a set of integer variables and that all expressions in $\exprs(\variables)$ are in linear integer arithmetic, real arithmetic, or mixed real and linear integer arithmetic.
\end{assumption}
This assumption ensures that satisfiability is decidable and that there are effective means to perform quantifier elimination.

Let $V$, $\params$, $\states$, and $\inputs$ be the sets of all valuations of $\variables$, $\paramvars$, $\statevars$, and $\inputvars$, respectively.
A valuation $x \in X$ is  a \emph{state}. 
Systems and specifications are represented symbolically.
To denote the values of variables in the next state we  use a set $\statevars'$ of primed copies of $\statevars$.

We will write $\exprs(\variables)$ for the set of expressions from the theory over $\variables$. An \emph{update} $u$ assigns an expression $u(\var{x}) \in \exprs(\variables)$ to every variable $\var{x} \in \statevars$. 
We represent an update $u$ symbolically as $\mathbf{u} =  \bigwedge_{\var{x} \in \statevars} (\var{x}' = u(\var{x}))$.
A \emph{guard} is a predicate over $\variables$ and 
an \emph{action} $a$ is a pair $(g,u)$, where $g$ is a guard and $u$ is an update.
We will represent an action $a = (g,u)$ symbolically as $\mathbf{a} = g \wedge \mathbf{u}$. Note that  guards and updates may include parameters and that for some parameter instantiations, there may not be any initial states.%

A \emph{parameterized system} is a system that can be instantiated for different parameter values from $P$ to yield a concrete system. Mathematically, it is a tuple $M = (\paramvars,\statevars,\inputvars,\init, \actions)$, where %
$\init \in \exprs(\statevars \cup \paramvars)$ is a predicate that defines the initial states and $\actions$ is a set of actions.
Given a parameter instantiation $p \in P$, a \emph{concrete system} $M_p = (\statevars, \inputvars, \init_p, \actions_p)$ is obtained by substituting each variable  $\mathbf{p} \in \paramvars$ with  $p(\mathbf{p})$. %
Let $\systrans_p = \bigvee_{a \in \actions_p} \mathbf{a}$ be the transition relation formed by the actions in $\actions_p$.
A sequence $\pi = x_0, \inputval_0, x_1, \inputval_1, \dots$ with states $x_t \in X$ and inputs $\inputval_t \in I$ is a \emph{trace} of $M_p$ if $x_0 \models \init_p$ and for all $t$, we have $x_t \cup  \inputval_t \cup x'_{t+1} \models \systrans_p$. 

A parameterized infinite-state GR(1) specification is a tuple $\game = (\paramvars,\statevars,\inputvars, \init, \envtrans, \actions, \wincond)$, where 
$\init$, and
$\actions$ are as before; %
$\envtrans$ is the environment transition relation, a predicate over $\paramvars \cup \statevars \cup \inputvars$,
and $\wincond$ is a GR(1) winning condition. The winning condition has the form \[\bigwedge_{i=1}^{\nrass} \infoften \justiceassumption_i \rightarrow \bigwedge_{j=1}^{\nrgua} \infoften \justiceguarantee_j,\] where $\justiceassumption_i$ and $\justiceguarantee_j$ are sets of $m$ and $n$ predicates over $\statevars \cup \paramvars$, called the \emph{justice assumptions} and \emph{guarantees}, resp.
A concrete specification $\game_p  = (\statevars,\inputvars, \init, \envtrans, \actions, \wincond)$ is obtained by substituting $\paramvars$ according to $p$. 

The \emph{parameterized synthesis problem} for a parameterized  specification $\game = (\paramvars, \statevars,\inputvars,\init, \envtrans, \actions, \wincond)$ (and possibly a subset $P' \subset P$), is to find a parameterized system $M = (\paramvars,\hat{\statevars},\inputvars, \widehat{\init}, \hat{\actions})$ such that 
\begin{enumerate}
    \item $\hat{\statevars} \supseteq \statevars$, $\widehat{\init} \rightarrow \init$, and for all $xp \in X \times P$ where $xp\models\init$ there exists $\hat{x}p \in \hat{X}\times P$ such that $\hat{x}p \downharpoonright X \times P = xp$ and $\hat{x}p \models \widehat{\init}$. (Extra state variables are allowed, but the initial states are otherwise unchanged.)
    \item  $M$ changes $\statevars$ according to the actions $\actions$: for all $(\hat{g},\hat{u}) \in \hat{\actions}$ there is an action $(g, u) \in \actions$ such that $\hat{g} \rightarrow g$ and for all $\var{x} \in \statevars$, $\hat{u}(\var{x}) = u(\var{x})$.
    \item $M$ is \emph{correct (for $P'$)}: for all $p \in P$ ($p\in P'$, resp.), we have  $M_p \models \game_p$.   (See below for the semantics of GR(1).)
\end{enumerate}

\section{Preliminaries}
We write %
$\vars(\expr)$ for the set of variables present in an expression $\expr$ and $\mathcal{S}(V): V \rightarrow \exprs(V)$ (or just $\mathcal{S}$) for the set of substitutions of variables in $V$. For $s \in \mathcal{S}$, we write $\expr[s]$ for the application of $s$ on $\expr$, and $v \gets \expr$ for the substitution  that replaces just $v$ with $\expr$. We will use $a \oplus_n b$ to denote addition modulo $n$, and omit $n$ when obvious.

\subsection{Generalized Reactivity(1)}
\label{sec:gr1}

We will use the linear temporal operators $\always$ for globally and $\eventually$ for eventually. The formula $\infoften \phi$ means that $\phi$ happens infinitely often. 

Generalized Reactivity(1) or GR(1) was introduced in \cite{DBLP:conf/vmcai/PitermanPS06,DBLP:journals/jcss/BloemJPPS12}. As briefly mentioned above, the defining aspect of GR(1) specifications is the winning condition, which is of the form  
\begin{equation}
    (\infoften \justiceassumption_1 \wedge \dots \wedge \infoften \justiceassumption_{\nrass}) \rightarrow (\infoften \justiceguarantee_1 \wedge \dots \wedge \infoften \justiceguarantee_{\nrgua}).\label{eq:gr1-winning}
\end{equation}
The GR(1) synthesis algorithm can be implemented as a triply nested fixpoint, only uses a quadratic number of preimage computations.  Even though preimage computations are not constant time, this complexity compares favorably to more general specification languages like Linear Temporal Logic, which are 2EXP-hard \cite{DBLP:conf/popl/PnueliR89}.

Because the specification language allows for very efficient implementation, while remaining expressive enough for realistic specifications, GR(1) synthesis has been implemented in several tools \cite{DBLP:journals/jcss/BloemJPPS12,cav/EhlersR16,DBLP:journals/sosym/MaozR21,Livingston2024gr1c,Livingston24gr1py,gagliardi2025scalinggr1synthesiscompositional} and applied on a series of case studies in different application domains \cite{DBLP:journals/entcs/BloemGJPPW07,Godhal13,DBLP:journals/corr/MaozR16,DBLP:conf/amcc/ZehfrooshT19,Zhu22,DBLP:conf/models/WeteGWFK21,gagliardi2025scalinggr1synthesiscompositional}.

A GR(1) specification states that if the environment satisfies each condition $\justiceassumption_i$ infinitely often, then the system satisfies its conditions $\justiceguarantee_j$ infinitely often as well. It is important to note that the ``generalized'' in generalized reactivity refers to a  different notion than the generalization that we use in this paper: GR(1) specifications are closely related to Reactivity (Streett) specifications \cite{DBLP:journals/iandc/Streett82,DBLP:books/daglib/0077033}. They simultaneously restrict these specifications to only one pair (hence the ``(1)'') and generalize them by requiring multiple conditions to be met infinitely often.

Formally, a GR(1) specification is a concrete specification $\game = (\statevars,\inputvars,\init, \envtrans, \actions, \wincond)$, where the winning condition $\wincond$ is of the form shown in \cref{eq:gr1-winning}. Writing $\systrans = \bigvee_{a \in A}\mathbf{a}$ as before, we have the following definition of the semantics.

\begin{definition}\label{def:gr1semantics}
    a system $M$ fulfills a GR(1) specification $\game$, denoted by $M \models \game$, if two conditions hold. 
    
    (1) There are no deadlocks, i.e., for all finite traces $\pi = x_0, \inputval_0,  \dots, x_n, \inputval_{n}$ of $M$ such that for all $i$ we have $x_i, \inputval_{i+1},x_{i+1} \models \envtrans$ (the environment transition relation is fulfilled), there is an  $x_{n+1}$ such that $x_n, \inputval_n, x_{n+1} \models \sigma$.

    (2) for all traces  $\pi = x_0, \inputval_0, x_1, \inputval_1, \dots$ of $M$ such that for all $i$ we have $x_i, \inputval_{i+1},x_{i+1} \models \envtrans$, we have (a) for each $i$, $x_i, \inputval_{i+1},x_{i+1} \models \systrans$ (the system transition relation is fulfilled) and (b) $\pi \models \phi$ (the fairness condition is fulfilled).
\end{definition}

Note that a restrictions $\phi$ on the states that the system may visit can be encoded by conjoining every guard with $\phi$, so that undesired executions will deadlock.

We see the specification is seen as a \emph{game} between an environment player that controls the input $\inputs$ and a system player that controls the state $\states$. The question is whether (and from which states) a \emph{strategy} exists so that no matter which inputs the environment player chooses,  the system player can always choose a transitions so that the resulting infinite trace satisfies the specification. 

As a building block to compute this set of states, we need the coax operator $\coax$, where for $\states' \subseteq \states$  we denote by $\coax \states'$ the set of states from which the system player can  the system to some $\state \in \states'$. Recalling that $\systrans = \bigvee_{a \in \actions}$, we have
$$
\coax \states' = \{ \state \in \states \mid 
        \forall \inputval\in\inputs.\,  \state, \inputval \models \envtrans \rightarrow
            \exists \state' \in \states.\, \state, \inputval, \state' \models \systrans \text{ and } \state' \in \states'
        \}. 
$$

\paragraph{Finite State GR(1) Synthesis.}
The classical algorithm for  GR(1) synthesis with Boolean variables \cite{DBLP:journals/jcss/BloemJPPS12} decides whether a concrete GR(1) specification is \emph{realizable}, that is, whether a system exists that satisfies it. If so, it returns a system $M = (\statevars \cup \{\cntr\}, \inputvars, \widehat{\init}, \hat{\actions})$ and a predicate $\invariant$ over $\statevars$ representing the winning states. 

The GR(1) algorithm computes the following triply nested fixpoint.
$$
\invariant = \nu
\begin{bmatrix}
Z_1 \\
Z_2 \\
\vdots \\
Z_{\nrgua}
\end{bmatrix} .\, 
\left[
\begin{array}{c}
\mu Y.\, \left( \bigvee_{i=1}^{\nrass} \nu X.\, \left( J^s_1 \land \coax Z_2 \lor \coax Y \lor \neg J^e_i \land \coax X \right) \right) \\
\mu Y.\, \left( \bigvee_{i=1}^{\nrass} \nu X.\, \left( J^s_2 \land \coax Z_3 \lor \coax Y \lor \neg J^e_i \land \coax X \right) \right) \\
\vdots \\
\mu Y.\, \left( \bigvee_{i=1}^{\nrass} \nu X.\, \left( J^s_n \land \coax Z_1 \lor \coax Y \lor \neg J^e_i \land \coax X \right) \right)
\end{array}
\right].
$$
The inner greatest fixpoints (with iterates $X$) find cycles of states in which some assumption $i$ is never visited. The middle  least fixpoints (with iterates $Y$) find paths that either go into such a cycle or eventually reach $\justiceguarantee_j$. Finally, the outer greatest fixpoint (with iterate $Z$) finds a cycle that either repeatedly visits all guarantees or ever visits some assumption $\justiceassumption_i$. A naive implementation of the triply nested fixpoint gives a cubic algorithm. However, the formulas $\bigvee_{i=1}^{\nrass} \nu X.\, ( J^s_j \land \coax Z_{j\oplus 1} \lor \coax Y \lor \neg J^e_i \land \coax X )$ are monotonic in $Z_{j\oplus 1}$. Thus, in each iteration of the outer fixpoint for $\invariant$, we can start the computation for $X$ with the previous fixpoint, yielding a quadratic complexity. (See \cite{DBLP:journals/tcs/BrowneCJLM97}.)

\begin{algorithm}[t]
\begin{algorithmic}[1]
\State $\invariant \gets \true$
\While{$\invariant$ changed}
    \For{$j \in [1,n]$}
        \State $Y \gets \false$
        \State $r \gets 0$
        \While{$Y$ changed}
            \State $start \gets (\justiceguarantee_j \land \coax \invariant) \lor \coax Y$
            \State $Y \gets \false$
            \For{$i \in [1,m]$}
                \State $X \gets \invariant$
                \While{$X$ changed}
                    \State $X \gets start \lor (\neg \justiceassumption_i \land \coax X)$
                \EndWhile
                \State $Y \gets Y \lor X$
                \State $mX[j][r][i] \gets X$
            \EndFor
            \State $mY[j][r] \gets Y$
            \State $r \gets r + 1$
        \EndWhile
        \State $\invariant \gets Y$
    \EndFor
\EndWhile
\State \Return{$\invariant$}
\end{algorithmic}
\caption{GR(1) fixpoint algorithm} 
\label{alg:gr1fixpoint}
\end{algorithm}

\Cref{alg:gr1fixpoint} shows the implementation of the triply nested fixpoint that returns the set of states $\invariant$ from which a winning strategy exists. In case of success, that is, if the initial state is in $\invariant$, we can construct the strategy using the iterates $mX$ and $mY$, as follows.

The system has memory: it has one substrategy for each of the $n$ guarantees, which it fulfills in a round-robin manner. A counter $\cntr \in [1,\nrgua]$  keeps track of the next guarantee to be fulfilled. By convention, it starts with the first guarantee and thus has initial state $\widehat{\init} = \init \wedge c = 1$. %

The actions are partitioned into sets  $\{\actions^j\}_{j \in [1,\nrgua]}$, so that the transition relation is $\systrans = \bigvee_{j=1}^{\nrgua} \systrans^j$, where $\systrans^j$ is defined by $\actions^j$. For every $j \in [1,\nrgua]$,
substrategy $j$ determines what to do when $\cntr = j$: Actions in $\hat{\actions}^j$ are of the form  $(\cntr = j) \wedge g \wedge \mathbf{u}$ for some guard $g$ and update $u$.
Furthermore, the system 
moves to the next substrategy when the current one is fulfilled: For each $x \cup \inputval \cup x' \models \systrans^j$, $x' \in \justiceguarantee_j$ iff  $x' \models (\cntr' = j \oplus 1)$.

Finally, the system
makes progress: for any trace $\pi = x_0,\inputval_0, \dots$ according to $\systrans^j$ so that $\envtrans$ is always satisfied, either $\pi$ is finite and the last state has $\cntr = j \oplus 1$, or there is a $\justiceassumption_i$ for which there are only finitely many $t$ such that $x_t \models \justiceassumption_i$.

\newcommand{\maxr}{\operatorname{maxr}}
When $\cntr = j$, the algorithm guarantees progress to guarantee $\justiceguarantee_j$ as follows. The iterates $mY[j][r]$ ($r \in [0,\maxr_j]$) give a ranking function on the distance $r$ to $\justiceguarantee_j$ with $mY[j][\maxr_j] = \invariant$. If $r=0$, the counter is increased. For $r>1$, the iterate $mY[j][r]$ is further partitioned into $mX[j][r][i]$. What happens in $mX[j][r][i]$ depends on whether $\justiceassumption_i$ is satisfied. If it is, the system can make progress to a $mY[j][r']$ with $r' <r$. If it is not, the system can stay in $mX[j][r][i]$ or move to an $mX[j][r'][i']$ with either $r'<r$ or $r'=r$ and $i'<i$. Thus, progress can be made whenever  $\justiceassumption_i$ is satisfied and if all assumptions are satisfies infinitely often, $\justiceguarantee_j$ is eventually reached.

\paragraph{Infinite State GR(1) Synthesis.}
We use the infinite state GR(1) synthesis algorithm as a building block for  parameterized synthesis. Let  $\game = (\statevars,\inputvars, \init, \envtrans, \actions, \wincond)$ be a concrete GR(1) specification and let $M = (\hat{\statevars}, \inputvars, \init, \hat{\actions})$ be a concrete system with $\hat{\statevars} \supseteq \statevars$.
We say that $M$ \emph{satisfies} $\game$ ($M \models \game$) if for any trace $x_0,\inputval_0, \dots$ of $M$,
(1) The system never deadlocks. That is, for all $t$ such that $x_t,\inputval_t \models \envtrans$  there is at least one action $(g,u) \in \hat{\actions}$ such that $x_k \cup \inputval_k \models g$;
(2) if for each $t$, $x_t \cup \inputval_t \models \envtrans$ and for all $i$ there are infinitely many $t$ such that $x_t \models \justiceassumption_i$, then for all $j$ there are infinitely many $t$ such that $x_t \models \justiceguarantee_j$.

An approach to synthesize GR(1) specifications with theory variables can be found in \cite{DBLP:conf/isola/MaderbacherWB24}. In the following, we will refer to that approach under the name \textsc{GR(1)-Synthesis}. It closely follows the fixpoint approach for Boolean specifications that we describe above by representing potentially infinite sets of states as SMT formulas. The coax operator $\coax$ is implemented using a quantifier elimination tactic on these formulas. The fixpoint algorithm (\ref{alg:gr1fixpoint}) is otherwise unchanged from the Boolean case, but crucially, GR(1) synthesis for non-Boolean variables is undecidable and thus the procedure may not terminate.  

If the fixpoint algorithm terminates, we can use the intermediate results $mX$ and $mY$ to build a system $M$.
We will use the shorthands 
\begin{align*}
mY[j][=\!r]      &= mY[j][r] \land \neg mY[j][<r],\\
mY[j][<\!r]      &= \bigvee_{r'<r} mY[j][r'],\\
mX[j][=\!(r,i)] &= mX[j][r][i] \land \neg \bigvee_{i'<i} mX[j][r][i'] \land \neg \bigvee_{r'<r}\bigvee_{i} mX[j][r'][i]\text{, and}\\  %
wp(S,a) &= \envtrans \rightarrow (g \land S[\{(x,u(x)) \mid x \in \statevars\}]) \;\text{where}\; a = (g,u).
\end{align*}
For each action $a \in \actions$ from the specification and for each substrategy $j$, the system has an action $\hat{\action} \in \hat{\actions}^j$ with a guard $\gamma(j,\action)$, where $\gamma$ is defined using three cases:
\begin{equation}
\gamma(j,\action)= \gamma_1(\action) \lor \gamma_2(j,\action) \lor \gamma_3(j,\action).
\label{eq:simplestrategy}
\end{equation}
The first part states that an action $\action$ has to remain in $\invariant$ when all guarantees are simultaneously satisfied. In that case there is no defined next substrategy and the system only has to ensure that it does not violate the invariant.
$$
        \gamma_1(\action) = \bigwedge_j \justiceguarantee_j \land \wp(\invariant, \action).
$$
The second part describes that an action can be used when it ensures progress towards the current target guarantee.
\begin{equation*}
    \begin{split}
        \gamma_2(j,r,\action) &= mY[j][=r] \land \wp(mY[j][<r], \action) \text{ and}\\
        \gamma_2(j,\action) &= \bigvee_{r\geq1} \gamma_2(j,r,\action).
    \end{split}
\end{equation*}
Finally, the third part states that if an assumption is violated, the system can take an action that remains in the same region. This means that when one assumption is eventually always violated the system does not need to make progress.
\begin{equation*}
    \begin{split}
        \gamma_3(j,r,i,\action) &= mX[j][= (r,i)] \land \neg \justiceassumption_i \land \wp(mX[j][r][i], \action)\text{ and}\\
        \gamma_3(j,\action) &= \bigvee_{r} \bigvee_{i} \gamma_3(j,r,i,\action).
    \end{split}
\end{equation*}
To obtain compact strategies the guards are simplified using a redundancy elimination algorithm \cite{DBLP:conf/sas/DilligDA10}.
The invariant and the negated earlier guards are used as additional context for the simplification.
For a detailed description of the simplification algorithm for GR(1) strategies with theory variables see \cite{DBLP:conf/isola/MaderbacherWB24}.

\subsection{Syntax-Guided Synthesis}
\label{sec:sygus}
The \emph{syntax-guided synthesis} (SyGuS) problem \cite{DBLP:series/natosec/AlurBDF0JKMMRSSSSTU15} is to construct a correct system given a formal specification of correctness, a grammar describing all possible systems, and a background theory. It assumes that a verification tool for systems in the given theory is available. Whereas deductive approaches like the one for GR(1) are characterized by a mechanical procedure to construct the system from the specification, syntax-guided synthesis approaches often work inductively, by iterating through various implementations and learning from each failure, until a correct implementation is found. \emph{Counterexample guided synthesis} in particular \cite{DBLP:conf/asplos/Solar-LezamaTBSS06} considers system candidates of increasing size and thus complexity, meaning that it tends to return the simplest system possible.

In the following, we assume that we are given a syntax-guided synthesis procedure  that takes a set of function symbols $f_1,\dots f_n$ and a specification $\phi$ containing these symbols and returns a set of functions $f^*_i$ such $\phi[f_i \gets f^*_i]$ is true. We denote this procedure by $\textsc{FuncSynth}((f_1,\dots,f_n), \phi)$.

Syntax-guided synthesis uses a grammar to define the search space.
We provide the grammars used for the various calls to \textsc{FuncSynth} in extended Backus–Naur form.
Non-terminals are written as capital letters.
We use $\paramvars, \statevars$ and $\inputvars$ as shorthand for the variable symbols contained in the respective set.

\subsection{Anti-Unification}
\label{sec:anti-unification}
Our generalization procedure uses the classic anti-unification algorithm for first order expressions \cite{plotkin1970, reynolds1970, DBLP:conf/ijcai/CernaK23}. %
We show the anti-unification algorithm for two expressions in \cref{alg:anti-unification}.
The algorithm takes two expressions $e_1$ and $e_2$ and computes a general expression $g$ with variables from $\Holes$ (``holes'') for the parts that differ, along with substitutions $s_1, s_2 \in \mathcal{S}(\Holes)$ such that $(e_1,e_2) = (g[s_1], g[s_2])$.
Expression $g$ is the \emph{least general generalization} of $e_1$ and $e_2$, i.e., the expression with the smallest holes that can be unified with both input expressions.

\begin{algorithm}[tb]
\caption{Anti-unification of two expressions}\label{alg:anti-unification}
\begin{algorithmic}
\Function{Anti-unify}{$e_1, e_2$}
    \State $s_1 \gets \{(h_0,e_1)\}$
    \State $s_2 \gets \{(h_0, e_2)\}$
    \State $g \gets h_0$
    \While{$g$ changed}
        \If{$\exists h,f,e_a,e_b,e_c,e_d,\dots: (h, f(e_a, e_b, \dots)) \in s1 \land (h, f(e_c, e_d, \dots)) \in s2$} %
            \State $g \gets g[h \gets f(h_a, h_b, \dots)]$ \Comment{Fresh variables $h_a, h_b, \dots$}
            \State $s_1 \gets s_1 \setminus \{(h, f(e_a, e_b, \dots))\} \cup \{ (h_{a}, e_a), (h_{b}, e_b), \dots \}$
            \State $s_2 \gets s_2 \setminus \{(h, f(e_c, e_d, \dots))\} \cup \{ (h_{a}, e_c), (h_{b}, e_d), \dots \}$
        \EndIf
        \If{$\exists h_a, h_b, e_a,e_c: h_a \neq h_b \land \{(h_a, e_a), (h_b,e_a)\} \subseteq s_1 \land \{(h_a, e_c), (h_b, e_c)\} \subseteq s_2$}%
            \State $g \gets g[h_a \gets h_b]$
            \State $s_1 \gets s_1 \setminus \{(h_a, e_a)\}$
            \State $s_2 \gets s_2 \setminus \{(h_a, e_c)\}$
        \EndIf
    \EndWhile
    \State \textbf{return} $g$, $s_1$, $s_2$
\EndFunction
\end{algorithmic}

\end{algorithm}

The algorithm keeps a \emph{generalizer} $g$ and two substitutions $s_1$ and $s_2$ so that in any iteration, $e_1 = g[s_1]$ and $e_2 = g[s_2]$. It starts with the most general $g$, which is a single variable and applies one of two rewrite rules until a fixpoint is reached.
The first rule makes the generalizer more specific by moving shared symbols into it, while the second rule eliminates redundancies by combining holes that are equivalent in both expressions.

As an example, consider the  expressions $e_1 = x \leq 1 \land y \leq 1$ and $e_2 = x \leq 2 \land y \leq 2$.
The algorithm starts with $g=h_0$ and the substitutions $s_1 = \{(h_0, x \leq 1 \land y \leq 1)\}$ and  $s_1 = \{(h_0, x \leq 2 \land y \leq 2)\}$.
In the first iteration, it applies Rule~1, moving the conjunction into the generalizer, which results in $g = h_1 \wedge h_2$, $s_1 = \{(h_1, x \leq 1),(h_2, y \leq 1)\}$ and   $s_2 = \{(h_1, x \leq 2),(h_2, y \leq 2)\}$. It then applies Rule~1 two more times for the comparisons and Rule~2 to eliminate the separate substitutions for the constants 1 and 2. Finally, considering $x$ and $y$ as nullary functions, it uses Rule~1 to include $x$ and $y$ into the generalizer. 
Finally, $\textsc{Anti-unify}(e_1,e_2)$ returns the generalizer $g = x \leq h \land y \leq h$ and substitutions $s_1 = \{(h,1)\}$ and $s_2 = \{(h,2)\}$.

The anti-unification of two expressions is unique (up to the names of the holes). This result carries over to arbitrary finite sets of expressions \cite{DBLP:conf/ijcai/CernaK23} and in the following, we will assume that \antiunify takes an arbitrary finite set of expressions and returns one generalizer and a substitution for each argument. In the following, we  call an expression \emph{ground} if it doesn't contain any holes.

\section{Parameterized Synthesis}\label{sec:method}
We use the insight that systems realizing a specification for different parameter values are often similar and that synthesizing a system for a small parameter is often fast. 
The main loop of our algorithm is shown in \cref{alg:param_synthesis}; we will describe its subroutines in the following sections.
First, we select a (small) parameter value $p_{\newp}$ that satisfies the $\init$ predicate in the specification $\game$.
The main synthesis loop proceeds as follows.
A concrete specification $\game_{p_{\newp}}$ is obtained by instantiating $\game$ with $p_{\newp}$. %
Next, the algorithm calls the  GR(1) game solving algorithm to  synthesize a system for this concrete specification. 
If the specification $\game_{p_{\newp}}$ is unrealizable this implies that the parameterized specification is unrealizable. 

At any point in the algorithm, we have a set of concrete systems $M_p$ and invariants $\invariant_p$ for each $p$  in a finite subset $\params' \subseteq \params$.
We generalize the concrete systems to a parameterized candidate system that is correct for $P'$ using a function \textsc{GeneralizeExpr} described in \cref{sec:generalization} to generalize expressions. For the guards, we consider one substrategy for a guarantee $\justiceguarantee_j$ at a time, and generalize the guards used for each individual update separately.

We  use the concrete systems and their invariants to construct a parameterized proof candidate.
The  proof candidate consists of a generalized invariant together with a ranking function $(l,r)$.
\Cref{sec:proofs} describes the properties of the proof candidate and how we use syntax-guided synthesis and interpolation to search for a general ranking functions. (As we will see, we cannot simply generalize the ranking functions of each concrete system).

Generalization guarantees that the candidate systems are correct for all $p \in \params'$. It also tends to produce short expressions that  generalize beyond the current set of parameters.
To verify whether a candidate  program is correct for all parameters, we check the generalized strategy against the  generalized proof candidate using an SMT solver.
If verification is successful, the candidate strategy is returned. Otherwise, either the strategy or the proof candidate (or both) are not general enough. In this case, the consistency check (see \cref{sec:proofs}) generates a new parameter as a counterexample, which is added to $P'$ for the next iteration.

\begin{algorithm}[t]
\caption{Synthesis for parameterized infinite state GR(1)}\label{alg:param_synthesis}
\begin{algorithmic}
\Function{ParameterizedSynthesis}{$\game = (\statevars, \inputvars, \init, \actions)$}
    \State $p_{\newp} \gets \textit{select parameter value that satisfies}\; \init$; $P' \gets \{p_{\newp}\}$
    \State $\textit{Systems} \gets \emptyset$
    \While{true}
        \State $\game_{p_{\newp}} \gets \textit{instantiate}\; \game \; \textit{for} \; p_{\newp}$
        \If{$\game_{p_{\newp}}$ is unrealizable}  \textbf{return} \textit{Unrealizable} \EndIf
        \State $M_{p_{\newp}}, \invariant_{p_{\newp}} \gets \textsc{GR(1)-Synthesis}(\game_{p_{\newp}})$
        
        \State $\textit{Systems} \gets \textit{Systems} \cup \{(p_{\newp}, M_{p_{\newp}}, \invariant_{p_{\newp}})\}$
        
        \State $\invariant \gets \textsc{GeneralizeExpr}(\{ (p, \invariant_p) \mid (p,\_,\invariant_p) \in \textit{Systems} \})$
        \State $\hat{\actions} \gets \emptyset$
        \For{$j \in [1,\nrgua]$} \Comment{Generalize actions for guarantee $j$}
            \State $\textit{Strat}_j \gets \{ (p,\actions^j_p) \mid (p, M_{p}, \_) \in \textit{Systems}, \text{$A^j_p$ are the actions of $M_p$ for $\justiceguarantee_j$}  \}$
            \State $\hat{\actions}^j \gets \emptyset$ \Comment{Generalized substrategy $j$}
            \For{$(\_, u) \in \actions$}
                \State $\textit{g} \gets \textsc{GeneralizeExpr}(\{(p,g) \mid (p,\actions_p) \in \textit{Strat}_j \land (g,u) \in \actions_p \})$
                \State $\hat{\actions}^j \gets \hat{\actions}^j \cup \{(g,u)\}$
            \EndFor
            \State $\hat{\actions} \gets \hat{\actions} \cup \{ \hat{\actions}^j \}$
        \EndFor
        \State $(l,r) \gets \textsc{SynthesizeRankingFunction}(\invariant, \{\textit{Strat}_j\}_{j \in [1,n]})$
        \State $M \gets (\statevars \cup \{c\}, \inputvars, \init \land c=1, \hat{\actions})$
        \State $p_{\newp} = \textsc{CheckConsistency}(\game, M, \invariant, (l,r))$
        \If{consistent} \textbf{return} $M$  \EndIf    
        \State $P' \gets \{p_{\newp}\}$
    \EndWhile
\EndFunction

\end{algorithmic}
\end{algorithm}

\begin{example}\label{ex:generalization}
The  program from \cref{ex:runningintro} is constructed by this algorithm as follows.
We  focus only on the substrategy for $\cntr = 1$ (where $x$ is decreased).
As the \textbf{first parameter instantiation},  take $min=0 \land max = 2$.
    As our first substrategy,  assume we obtain $$M_{(0,2)}: (x+d\geq1, x'=x-1+d), (x+d<1, x'=x+d),$$ 
    along with invariant $\invariant_{(0,2)}: 0 \leq x \leq 2$.
    Since we have only one parameter value, the generalized strategy and the generalized invariant are just $M_{(0,2)}$ and $\invariant_{(0,2)}$. 
    When we verify whether $M_{(0,2)}$ works for all values of $\min$ and $\max$, we get the counterexample $\min=1 \land \max = 3$: $M_{(0,2)}$ reduces $x$ below $1$, which violates the proposed invariant.
    
    We continue with the \textbf{second iteration} for $(\min,\max) = (1,3)$. We find the substrategy $M_{(1,3)}: (x+d\geq2, x'=x-1+d), (x+d<2, x'=x+d)$
    with invariant $\invariant_{(1,3)}: 1 \leq x \leq 3$.
    We  generalize $M_{(0,2)}$ and $M_{(1,3)}$ to obtain  $$M_{\{(0,2),(1,3)\}}: (x+d\geq \mathbf{\text{\bf max}-1}, x'=x-1+d), (x+d< \mathbf{\text{\bf max} - 1}, x'=x+d).$$
    It is not hard to see that this program is not fully general. For instance for $(\min,\max) = (1,5)$, the system will never reach $x = 1$. We will thus continue the example in \cref{sec:rankingfunctions}, where we introduce ranking functions.

\end{example}
We will now state our main correctness result. The necessary lemmas will follow.
\begin{theorem}\label{thm:correctness}
Algorithm \textsc{ParameterizedSynthesis} is correct. That is, if it returns \textit{Unrealizable}, then $\game$ is unrealizable, and if it returns a system $M$, then $M \models \game$.
\end{theorem}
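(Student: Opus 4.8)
The plan is to verify correctness at each of the two exit points of \textsc{ParameterizedSynthesis} separately; no termination claim is needed, since the paper already notes the loop may diverge. I would first dispatch the easy direction. The algorithm returns \textit{Unrealizable} only immediately after the concrete realizability check reports that some instance $\game_{p_{\newp}}$ is unrealizable, where $p_{\newp}$ was chosen to satisfy $\init$. Appealing directly to condition~(3) of the parameterized synthesis problem, any correct parameterized system $M$ would require $M_p \models \game_p$ for every $p \in \params$, in particular for $p = p_{\newp}$. Since $p_{\newp} \models \init$ and (by soundness of the concrete realizability check from the cited prior work) no concrete system satisfies $\game_{p_{\newp}}$, no parameterized system can meet condition~(3), so $\game$ is unrealizable.

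For the second direction, the algorithm returns a system $M$ only when \textsc{CheckConsistency} reports consistent, and I would show that this $M$ meets all three conditions of the parameterized synthesis problem. Conditions~(1) and~(2) hold by construction and require only a short structural lemma: $M$ augments the state with the single counter $\cntr$ and sets $\widehat{\init} = \init \land \cntr = 1$, so the original initial states are preserved (condition~1), and every generalized action is produced by \textsc{GeneralizeExpr} from guards that each refine the guard of a fixed original action while retaining that action's update $u$, so $\hat g \rightarrow g$ and $\hat u{\restriction}_{\statevars} = u$ for the corresponding original action (condition~2). The generalization step is what links the concrete solutions returned by \textsc{GR(1)-Synthesis} to this structure, and its soundness for the parameters already in $\params'$ is used here as an intermediate fact.

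The substance of the argument is condition~(3): for every $p \in \params$, $M_p \models \game_p$ in the sense of \cref{def:gr1semantics}. This is exactly what the proof candidate $(\invariant, (l,r))$ certifies, and I would invoke the soundness lemma for the consistency check (to be established in \cref{sec:proofs}). That lemma should say that passing \textsc{CheckConsistency} discharges, as SMT queries universally quantified over $\paramvars$, the following obligations: (i) $\invariant$ is inductive under every action and deadlock-free under $\envtrans$, which gives the safety requirement and the no-deadlock requirement of \cref{def:gr1semantics} for every parameter value at once; and (ii) along the substrategy targeting each guarantee $\justiceguarantee_j$, the rank $(l,r)$ strictly decreases whenever the relevant justice assumption holds and never increases otherwise. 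Because these are solved as formulas over $\paramvars$, a consistent verdict means they hold simultaneously for all $p \in \params$, yielding $M_p \models \game_p$ uniformly.

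The main obstacle is precisely this soundness lemma, specifically the passage from the syntactic/well-foundedness conditions on $(l,r)$ to the semantic GR(1) liveness guarantee. I would prove it by the standard well-founded ranking argument applied per parameter: fix $p$ and a trace of $M_p$ in which $\envtrans$ always holds and every $\justiceassumption_i$ occurs infinitely often; on the suffix where the counter $\cntr$ is stuck at some $j$, obligation~(ii) forces $(l,r)$ to decrease infinitely often along a well-founded order while never increasing, which is impossible, so $\cntr$ must advance $j \mapsto j \oplus \nrgua 1$ infinitely often, and hence each $\justiceguarantee_j$ is visited infinitely often, establishing the fairness part of \cref{def:gr1semantics}. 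The care needed is to match exactly the three guard cases $\gamma_1,\gamma_2,\gamma_3$ to the rank behavior so that the ``assumption eventually always violated'' case is handled, which is where I expect the bookkeeping to be most delicate.
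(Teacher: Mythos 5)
Your proposal is correct and follows essentially the same route as the paper: the \textit{Unrealizable} case is discharged by noting that unrealizability of any concrete instance $\game_{p_{\newp}}$ forces unrealizability of $\game$, and the positive case is delegated to the soundness of \textsc{CheckConsistency} (\cref{prop:checkconsistency}). The ranking-function argument you sketch as the ``main obstacle'' is precisely the paper's proof of \cref{prop:checkconsistency}, so you have simply inlined what the paper factors out as a separate proposition.
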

\begin{proof}
Correctness of the  \textit{Unrealizable} result follows from the fact that $\game$ is unrealizable iff for some $p$, we have that $\game_{p}$ is unrealizable. The correctness of a positive answer follows from the correctness of \textsc{CheckConsistency} (\cref{prop:checkconsistency}).
\end{proof}

Note that the algorithm may not return a result. However, for certain types of specifications, it does guarantee progress.
\begin{proposition}\label{prop:progress}
If 
GR(1) synthesis is decidable for $\game_p$ for any $p \in P$,
our synthesis algorithm ensures progress by constructing in every iteration, a larger set of parameters $P'$ and a system $M$ that is correct for $P'$.
\end{proposition}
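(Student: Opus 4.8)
The plan is to fix an arbitrary iteration of the loop in \cref{alg:param_synthesis} and show (i) that it runs to completion and (ii) that, unless it returns, it leaves the loop with a strictly larger parameter set together with a system correct for that set. Throughout I identify the ``set of parameters $P'$'' with the domain of \textit{Systems}, i.e.\ the parameters for which a concrete solution has already been computed, since this is the set over which generalization is actually performed.

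For termination of one iteration, the only steps whose running time is a priori unbounded are the realizability test and the call to \textsc{GR(1)-Synthesis}, both of which are concrete GR(1) synthesis problems for $\game_{p_{\newp}}$; by the hypothesis of the proposition these are decidable and hence return. All remaining steps terminate: anti-unification always terminates and returns a generalizer (\cref{alg:anti-unification}); each \textsc{FuncSynth} call need only fit the finitely many parameter--value pairs given by $P'$, for which a solution always exists in the grammar (in the worst case a case split over the values in $P'$) and is therefore found; and \textsc{CheckConsistency} decides a decidable problem (\cref{prop:checkconsistency}) and returns either ``consistent'' or a counterexample parameter. Thus the iteration always reaches its final line.

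I would then establish the loop invariant that the constructed $M$, together with the proof candidate $(\invariant,(l,r))$, is correct at every $p \in P'$. By the defining property of anti-unification, instantiating the generalized invariant and each generalized guard at a known $p \in P'$ reproduces exactly the concrete invariant and guard of $M_p$; together with the corresponding guarantee for the synthesized ranking function (\cref{sec:proofs}), instantiating $M$, $\invariant$, and $(l,r)$ at any $p \in P'$ yields the concrete solution produced by \textsc{GR(1)-Synthesis}, which is correct. Hence $M_p \models \game_p$ and the proof candidate holds at $p$, for all $p \in P'$. This is the content of the generalization lemmas announced just before the theorem, which I would invoke here.

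Strict growth then follows. When \textsc{CheckConsistency} does not report ``consistent'', it returns, by \cref{prop:checkconsistency}, a parameter $p_{\newp}$ at which $M$ is inconsistent with $(\invariant,(l,r))$, so the proof candidate fails there. But the previous paragraph shows the candidate holds at every $p \in P'$, whence $p_{\newp} \notin P'$; adding the triple for $p_{\newp}$ in the next iteration strictly enlarges $P'$, and the system rebuilt over the enlarged set is again correct for it. The main obstacle is precisely this novelty argument: it is not enough that \textsc{CheckConsistency} returns \emph{some} counterexample, one must be sure it is new, which forces the generalization --- and especially the ranking-function synthesis, which cannot be obtained by naive generalization --- to yield a proof candidate that is exactly valid at \emph{all} previously seen parameters. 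Establishing that exactness, rather than mere over-approximation, is the crux, and it rests on the uniqueness of anti-unification and the correctness of \textsc{SynthesizeRankingFunction}.
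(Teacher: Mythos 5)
Your proof is correct and follows essentially the same route as the paper's own sketch: generalization of guards and invariants always exists for a finite $P'$ (with case distinction over parameter values as the fallback), ranking functions exist for each concrete system by \cref{prop:existenceofrankingfunction} and can be made valid on all of $P'$, consistency checking is decidable by \cref{prop:checkconsistency}, and your explicit novelty argument for the counterexample is exactly the observation the paper makes in its Consistency subsection. One wording correction: the synthesized $(l,r)$ does not instantiate at $p \in P'$ to ``the concrete solution produced by \textsc{GR(1)-Synthesis}'' --- the paper stresses that ranking functions are \emph{not} generalizations of the concrete iterates --- but your argument is unaffected, since the \textsc{FuncSynth} constraint already requires $\rankpred$ to hold at every $p \in P'$, which is all the novelty step needs.
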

\begin{proof} Sketch. A finite set of expressions, one for each $p \in P'$, can trivially be generalized using a case distinction over $p$. We can thus generalize the guards from $M_p$ for $p \in P'$ to obtain guards for a system $M_{P'}$, and that system will be correct for $P'$. We can do the same for invariants. As we will see in \cref{prop:existenceofrankingfunction}, a ranking function exists for any concrete system, and it can be generalized the same way. Finally, checking consistency is decidable if the underlying theory is (\cref{prop:checkconsistency}). 
\end{proof}
This result does not extend to $P$: a problem may be realizable for every single $p \in P$, but this does not imply that a program exists that works for all parameters.
For example, a guard $x \cdot p_1 \leq p_2$ is not in linear integer arithmetic, but all its concrete instances are.

\section{Expression Generalization}
\label{sec:generalization}

In this section, we describe a method to generalize a set of expression for different parameter values. As its input, the method takes a set $\exprs: P \rightharpoonup \exprs(\variables \setminus \paramvars)$ of expressions for the set $\dom(\exprs)$ of parameter values. It returns a single expression  $\expr \in \exprs(\variables)$ that includes parameter variables such that for each $p \in \dom(\exprs)$ we have $\expr[p] = \exprs(p)$.

The algorithm proceeds in two stages: first, we cluster similar expressions. For each cluster, we use anti-unification to identify the places in which the expressions differ, and use syntax-guided synthesis to synthesize a generalized expression for each hole. Second, when  no  similar expressions remain, %
we generalize the remaining dissimilar expressions by using case distinction over the parameter values, using interpolation to find proper parameter distinctions. 

The main data structure in our algorithm is a forest $\F = (N,\E, \exprmap, p)$, where 
$N$ is a set of nodes,  
$\E$ is a set of edges consisting of tuples $(n,s,n')$, where $s \in \mathcal{S}(\Holes)$ is a substitution,
$p: N \rightarrow 2^{\dom(\exprs)}$ associates a set of parameter values to each node, and
$\exprmap(n)$ is an expression in $\exprs(\variables \cup \Holes)$. Intuitively, the initial expressions form the leaves of our forest. Nodes higher up in a tree may contain holes that are the result of anti-unification and the substitutions on the edges explain how to reconstruct the leaves. Non-leaf nodes may also contain parameters and the instantiation of such an expression with the parameter valuation of a leaf results in an expression that is equivalent to the leaf.

We require that $E$ is forest-shaped. We denote the roots of a forest $\F$ by $\roots(\F)$, the leaves by $\leaves(\F)$, the subtree rooted in a node $n$ by $\tree(n)$, and its children by $\children(n)$. We will also write $\leaves(n)$ to denote $\leaves(\tree(n))$ and we  extend $\exprmap$ such that for $N' \subseteq N$, $\exprmap(N') = \{\exprmap(n) \mid n\in N'\}$.

We further require that for any leaf $p$, we have $|p(n)| = 1$ and otherwise, $p(n) = \bigcup_{n' \in \children(n)} p(n')$. Furthermore, for any edge $(n,s,n')$, we have that  $\exprmap(n') = \exprmap(n)[s]$. Thus, every node $n$ represents an expression $\exprmap(n)$, possibly with holes, that is valid for $p(n)$ and if $s_1,\dots s_n$ are the substitutions in the path from $n$ to $n''$, we have $\exprmap(n'') = \exprmap(n)[s_n \circ \dots \circ s_1]$. We write $E^* = \{(n_1,s,n_k) \mid \exists n_2,\dots,n_{k-1}. \forall i. (n_i, s_i, n_{i+1}) \in E \wedge s = s_{k-1} \circ \dots \circ s_{1}  \}$. In particular, for any $n \in N$, $(n,\id,n) \in E^*$. 

\begin{algorithm}[tbp]
\caption{Finding a generalization of a set of expressions}\label{alg:expr_clustering1}
\begin{algorithmic}
 \Function{GeneralizeExpr}{$\F = (N, \E, \exprmap, p)$} \Comment{Input: forest consisting of trees of one leaf each and no edges.}
     \State \clustersimilar($\F$)
     \State \generalizesimilar($\F$)
     \State \generalizedissimilar$(\F)$ 
     \State \textbf{return} $\lambda(\roots(\F))$ \Comment{singleton root}
 \EndFunction
\\
\Function{ClusterSimilar}{$\F = (N, \E, \exprmap, p)$} 
    \While{$\exists n_1, n_2 \in \roots(\F): n_1 \neq n_2 \text{ and } d(\exprmap(\leaves(n_1)) \cup \exprmap(\leaves(n_2))) \leq (0,M)$}
        \State $n_1,n_2 \gets \argmin_{n_1\neq n_2\in \roots(\F)} d(\exprmap(\leaves(n_1)) \cup \exprmap(\leaves(n_2)))$
        \State \merge$(n1,n2)$
    \EndWhile
\EndFunction
\\
\Function{Merge}{$\F= (N, \E, \exprmap, p), n_1, n_2$}
    \State $g,s_1,s_2 \gets \textsc{Anti-unify}(\lambda(n_1),\lambda(n_2))$
    \State $N \gets N \cup \{n\}$ \Comment{Where $n$ is a fresh node}
    \State $\lambda(n), p(n) \gets g, p(n_1) \cup p(n_2)$
    \State $\E \gets \E \cup \{(n,s_1,n_1), (n, s_2, n_2)\}$
\EndFunction
\\
\Function{GeneralizeSimilar}{$\F = (N, \E, \exprmap, p)$} 
    \For{$n \in \roots(\F)$}
        \For{each hole $h \in \exprmap(n)$}
            \State $\expr_{\synt} \gets \textsc{FuncSynth}(\expr_{\synt}, \bigwedge_{n' \in \leaves(n), (n,s,n') \in \E^*} \forall \state,\inputval: \expr_{\synt}(p(n'),\state,\inputval) = s(h))$ 
            \State $\lambda(n) \gets \exprmap(n)[h \gets \expr_{synth}]$
        \EndFor
    \EndFor
\EndFunction
\\

\Function{GeneralizeDissimilar}{$\F = (N, \E, \exprmap, p)$} 
    \State $R \gets \roots(\F)$
    \Function{Nodes${}_R$}{n}
        \State \Return{$R\cap \tree(n)$}
    \EndFunction
    \While{$|\roots(\F)| > 1$}
        \State $n_1,n_2 \gets \argmin_{n_1\neq n_2\in \roots(\F)} d(\lambda(\nodes_R(n_1)) \cup \lambda(\nodes_R(n_2)))$
        \State $N \gets N \cup \{n\}$ \Comment{Where $n$ is a fresh node}
        \State $\E \gets \E \cup \{(n,\id,n_1), (n,\id,n_2)\}$    
        \State $p(n) \gets p(n_1) \cup p(n_2)$
        \State $\text{cond} = \textsc{Interpolate}(p(n_1), p(n_2))$
        \State $\lambda(n) = \ite(\text{cond}, \exprmap(n_1), \exprmap(n_2))$ 
    \EndWhile
\EndFunction
\end{algorithmic}
\end{algorithm}

\paragraph{Similarity} %
The \emph{similarity} of a set of expressions is defined as a lexicographic order of two measures. 
Our primary measure is the \emph{variable distance}, viz. the number of variables from $\variables$ that are in some but not in all expressions: $d_v(\exprs) = |\bigcup_{\expr \in \exprs}\vars(\expr) \setminus \bigcap_{\expr \in \exprs}\vars(\expr)|$.
Our secondary measure is the size of the substitutions. Let $g^{\exprs}$ be the (unique) anti-unification of the set $\exprs$ and let $S^\exprs_{\expr} \subseteq \mathcal{S}(\Holes)$ denote the corresponding substitutions, where $g^\exprs[S^\exprs_{\expr}] = \expr$.
We define $d_s(\exprs) = \sum_{(h,\expr') \in S^\exprs_{\expr}}\max_{\expr \in \exprs}|\expr'|^2$ to be the sum of the squares of the biggest substituted expression for each hole. We square the size of the substituted expressions because we prefer multiple smaller holes over few big ones.
Our combined measure is the lexicographic order $d(\exprs) = (d_v(\exprs), d_s(\exprs))$. We call a set of $\exprs$ of expressions \emph{similar} if $d(\exprs) \leq (0,M)$, where $M$ is a user supplied constant. Thus, similar expressions are expressions that contain the same variables and  can be unified  with relatively small substitutions.

\paragraph{Generalizing Similar Nodes}
Our algorithm is described in \cref{alg:expr_clustering1}. The input is a set of ground expressions with the associated parameter values, in the form of a forest consisting of isolated nodes. In the first stage, we focus on similar expressions.  We build a tree by iteratively clustering subtrees that represent sets of similar expressions. %

After clustering has completed, we generalize similar expressions using functional synthesis. First, we use anti-unification to find a generalizer, a set of holes, and the corresponding substitutions. We then construct a new parent node for the similar expressions that is labeled with the generalizer and label the edges with the substitutions.
Let $\Holes_n$ be the set of holes in $\exprmap(n)$. For each $h \in \Holes_n$, we use syntax-guided synthesis to synthesize an  expression $s^*(h)$ over $\paramvars$ that generalizes all substitutions, thus constructing a substitution $s^*: \Holes_n \rightarrow \expr(\variables)$. The ground expression $\exprmap(n)[s^*]$ generalizes all children: for each $(n,s,n') \in \E$, we have  
$\exprmap(n)[s^*][p(n')] = e(n')$. 

The syntax-guided synthesis query uses the following grammar to synthesize expressions $\expr_{\synt}$ for holes in \cref{alg:expr_clustering1}. The root non-terminal is $H$ \footnote{For real arithmetic this grammar might not always include a generalizing expression. To guarantee termination an appropriate grammar needs to be used that can at least encode a case distinction over all concrete values. If no grammar is provided the syntax-guided synthesis tool it will select a grammar based on the used theory. For progress and termination arguments we will assume that the used grammar always contains a generalizing expression.}.
\begin{align*}
      H &::= I \mid B,
    \\B &::= B \land B \mid \neg B \mid  I \leq I \mid \statevars_{\mathbb{B}} \mid \inputvars_{\mathbb{B}},
    \\I &::= I + I \mid I - I \mid ite(B,I,I) \mid V \mid  0 \mid 1, \text{ and}
    \\V &::= \paramvars \mid \statevars_{\mathbb{Z}} \mid \inputvars_{\mathbb{Z}} \mid \statevars_{\mathbb{R}} \mid \inputvars_{\mathbb{R}}.
\end{align*}

\paragraph{Generalizing Dissimilar Nodes}
In the second stage, %
we cluster the remaining nodes. We greedily create a new parent for the pair of most similar nodes and run anti-unification to factor out the common structure of the expressions. We do this until we are left with a single binary tree. Since the corresponding expressions are not syntactically alike, we do not  generalize the substitutions for the holes to a simple unifying expression. Instead, a we construct a case distinction over the parameter values. Since not every parameter value occurs in the tree, we use interpolation to  judiciously guess the values for the case distinction. 

We use CVC5 to compute interpolants in \cref{alg:expr_clustering1} and restrict them using the following grammar with the root non-terminal $X$.
\begin{align*}
      X &::= A \mid A \lor X,
    \\A &::= C \mid C \land A,
    \\C &::= I = I \mid I \leq I, \text{ and}
    \\I &::= \paramvars \mid \textit{constants used in parameter valuations}.
\end{align*}
The grammar ensures that the interpolation engine attempts to build a short and general expression.
An interpolant can always be found for a given parameter sets as the interpolant may consist of a simple enumeration of all values from one of the parameter sets.

\begin{example}
\Cref{fig:generalization_tree} shows the generalization tree constructed for four expressions. Each node $n$  shows the expression $\lambda(n)$ in blue, the associated parameters $p(n)$  in green and the generalized expression in red.
The first merge operation combines the expressions for $p=2$ and $p=1$ (bottom left), which only differ in their constants.
Their parent node shows the expression obtained by anti-unification with the holes $h1$ and $h2$. (We omit the substitutions.)
In the next step, we combine the nodes for $p=-2$ and $p=-3$.
Using syntax-guided synthesis, we find the  expressions $a \geq p \wedge a \leq p+2 $ on the left, a simple expression that generalizes the expression of its children: it equals $a \geq 2 \wedge a \leq 4$ when $p=2$ and $a \geq 1 \wedge a \leq 3$ when $p=1$. Similarly, we find the generalized expression $a \leq p \vee a \geq p+2$ on the right.

The resulting nodes have dissimilar expressions. The algorithm combines them into a single root node with  expression $h5$. Using interpolation, it  finds the distinguisher $p \geq 1$, which differentiates $p \in \{1,2\}$ and $p \in \{-2,-3\}$. 
The resulting expression is $\ite(p \geq 1,\ a \geq p \wedge a \leq p + 2,\ a \leq p \vee a \geq p+2)$.
\end{example}

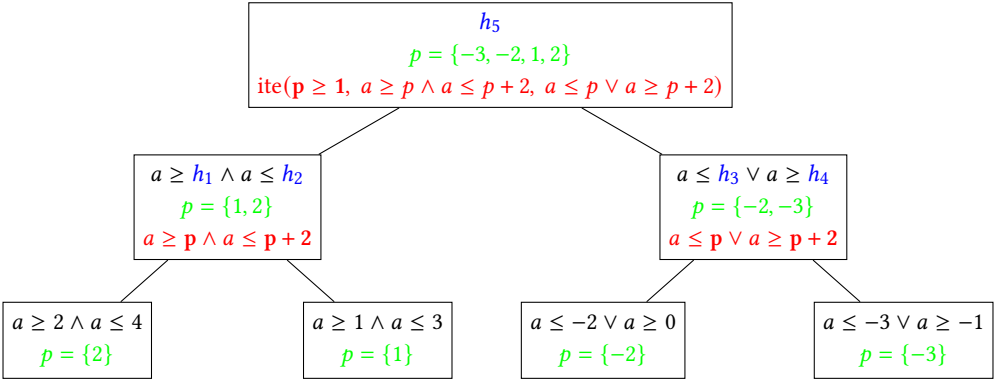
\begin{figure}[tb]
    \centering
\begin{tikzpicture}[level distance=2cm, sibling distance=8cm, level 2/.style={sibling distance=4cm}, every node/.style={draw, rectangle, minimum size=1cm, align=center}]
    \node (root) at (0,0) {
    \textcolor{blue}{\small $h_5$}\\
    \textcolor{green}{\small $p = \{-3,-2,1,2\}$}\\
    \textcolor{red}{\small $\ite(\mathbf{p \geq 1},\ a \geq p \wedge a \leq p + 2,\ a \leq p \vee a \geq p+2)$}
    };
    
    \node (node0) at (-3.5,-2) {\small  $a \geq \textcolor{blue}{h_1} \wedge a \leq \textcolor{blue}{h_2}$\\ 
    \textcolor{green}{\small $p = \{1,2\}$}\\
    \textcolor{red}{\small $a \geq \mathbf{p} \wedge a \leq \mathbf{p + 2}$}
    };
    \draw (root) -- (node0);
    
    \node (node00) at (-5.5,-3.75) {\small  $a \geq 2 \wedge a \leq 4$ \\
    \textcolor{green}{\small $p = \{2\}$}
    };
    \draw (node0) -- (node00);
    
    \node (node01) at (-1.5,-3.75) {\small  $a \geq 1 \wedge a \leq 3$ \\  
    \textcolor{green}{\small $p = \{1\}$}};
    \draw (node0) -- (node01);
    
    \node (node1) at (3.5,-2) {\small  $a \leq \textcolor{blue}{h_3} \vee a \geq \textcolor{blue}{h_4}$ \\ 
    \textcolor{green}{\small $p = \{-2,-3\}$}\\
    \textcolor{red}{\small $a \leq \mathbf{p} \vee a \geq \mathbf{p+2}$}
};
    \draw (root) -- (node1);
    
    \node (node10) at (1.5,-3.75) {\small  $a \leq -2 \vee a \geq 0$  \\ 
    \textcolor{green}{\small $p = \{-2\}$}};
    \draw (node1) -- (node10);
    
    \node (node11) at (5.5,-3.75) {\small  $a \leq -3 \vee a \geq -1$\\ 
    \textcolor{green}{\small $p = \{-3\}$}};
    \draw (node1) -- (node11);   
\end{tikzpicture}
   
    \caption{Example of a Generalization Tree}
    \label{fig:generalization_tree}
\end{figure}

\begin{proposition}
\label{prop:generalization}
Given a set of expressions $\exprs: P \rightharpoonup \exprs(\variables \setminus \paramvars)$ with $P' = \dom(\exprs)$,  a forest $\F = (N, \emptyset, \exprmap, p)$ consisting of exactly one leaf $n_p$ for each $p \in P'$  with $p(n_p) = p$,
\textsc{GeneralizeExpr}($\F$) returns an expression $\lambda \in \exprs(\variables)$ such that for all $p \in \params'$, we have $\lambda[p] =  \exprmap(n_p)$.
\end{proposition}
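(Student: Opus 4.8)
The plan is to isolate a single invariant maintained across all three stages of \textsc{GeneralizeExpr} that directly yields the claim once only one root remains. The invariant I would track has two parts: a \emph{structural} part stating that for every node $n$ and every leaf $n' \in \leaves(n)$ with composed substitution $s$ (so $(n,s,n') \in \E^*$) we have $\exprmap(n') = \exprmap(n)[s]$; and a \emph{strengthened} part stating that once $\exprmap(n)$ is ground, $\exprmap(n)[p(n')] = \exprmap(n')$ for every such leaf. The structural part holds trivially for the input, where each tree is a single leaf reachable by the identity path, so it suffices to check that each procedure preserves the invariant.

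First I would handle \clustersimilar, whose only state change is through \merge. By correctness of \antiunify (which guarantees $(\exprmap(n_1),\exprmap(n_2)) = (g[s_1],g[s_2])$), the two new edges $(n,s_1,n_1)$ and $(n,s_2,n_2)$ satisfy $\exprmap(n_i) = \exprmap(n)[s_i]$; composing with the pre-existing paths below $n_1$ and $n_2$ (using the associative composition built into the definition of $\E^*$) extends the structural invariant to every leaf under the new root. Since each \merge reduces the number of roots by one, \clustersimilar terminates.

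Next comes the crux, \generalizesimilar, where each root $n$ has its holes eliminated one at a time. For a hole $h$, \textsc{FuncSynth} returns an expression $\expr_{\synt}$ over $\paramvars$ (and state/input variables) such that for every leaf $n'$ below $n$ with composed substitution $s$ we have $\expr_{\synt}(p(n')) = s(h)$; by the footnote's assumption that the grammar always contains a generalizing expression, such a target exists and \textsc{FuncSynth} succeeds. Because distinct holes are replaced by expressions that themselves contain no holes, the substitutions are independent and order-insensitive, so after processing all holes $\exprmap(n)$ is ground and equals $\exprmap(n)[s^*]$ with $s^*(h) = \expr_{\synt}$. Instantiating at any descendant leaf $n'$ with path substitution $s$ then gives $\exprmap(n)[s^*][p(n')] = \exprmap(n)[s] = \exprmap(n')$, since $s^*(h)[p(n')] = \expr_{\synt}(p(n')) = s(h)$ for each hole $h$; this establishes the strengthened ground invariant at every root. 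The main obstacle of the whole argument lives here: one must argue carefully that the per-hole synthesis constraints, each quantified over all leaves of the subtree via $\E^*$, combine to force simultaneous agreement at every leaf, and that filling holes independently is sound because synthesized fillers are hole-free.

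Finally I would treat \generalizedissimilar by induction on the number of combining steps, with the ground invariant as the hypothesis on the two roots $n_1,n_2$ being merged. The new node carries identity-substitution edges and $\exprmap(n) = \ite(\text{cond}, \exprmap(n_1), \exprmap(n_2))$, where $\text{cond} = \textsc{Interpolate}(p(n_1),p(n_2))$ evaluates to $\true$ on $p(n_1)$ and $\false$ on $p(n_2)$; such an interpolant exists (in the worst case enumerating one parameter set), and the two sets are disjoint because they come from disjoint subtrees. Hence for $p \in p(n_1)$ we get $\exprmap(n)[p] = \exprmap(n_1)[p] = \exprmap(n_p)$ by the hypothesis, and symmetrically for $p \in p(n_2)$, so the ground invariant is preserved and $n$ is itself ground. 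Since each step reduces the root count by one, the loop ends with a single root $n_r$ whose parameter set is the union over all leaves, namely $P'$, and $\exprmap(n_r)[p] = \exprmap(n_p)$ for every $p \in P'$, which is exactly the statement. Finally, $\exprmap(n_r) \in \exprs(\variables)$ because every hole was replaced by an expression over $\variables$ and the $\ite$ and interpolant constructions remain within the theory.
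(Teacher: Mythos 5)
Your proof is correct and takes essentially the same route as the paper's: the same three-stage invariant argument (syntactic preservation along anti-unification edges, per-hole elimination justified by the \textsc{FuncSynth} constraints quantified over the leaves, and interpolant-guarded case distinction with disjoint parameter sets for dissimilar roots), concluding from the single remaining root. The only cosmetic difference is that the paper explicitly separates syntactic equality from semantic equivalence ($\equiv$), which your chain of equalities in the \generalizesimilar step implicitly mixes; this does not affect correctness.
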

\begin{proof}
    The algorithm maintains the following invariants, which we will not prove separately: (1) we have a forest, (2) the set of leaves remains unchanged and for each leaf $n$,  $\lambda(n)$ and $p(n)$ remain unchanged, and (3) for any $n$, $p(n) = \bigcup_{n' \in \leaves(n)}p(n')$ and the sets $p(n')$ for the children $n'$ of some node $n$ are disjoint.
    
    To be a bit more precise, in this proof, we will use $=$ for syntactic equivalence and $\equiv$ for semantic equivalence.

    Function \clustersimilar maintains the invariant $I_1$: for any node $n$ and any leaf $n' \in \leaves(n)$, if $(n,s,n') \in E^*$, we have that $\lambda(n)[s] = \lambda(n')$. Note that this invariant holds trivially for the input because $(n,\id,n) \in E^*$ for all $n$. Further,  by induction, if  we have $\lambda(n)[s] = \lambda(n')$ for all $(n,s,n') \in E$, then $\lambda(n)[s] = \lambda(n')$ for all $(n,s,n') \in E^*$. It thus suffices to prove the fact for all edges.

    \merge and, by extension, \clustersimilar maintain $I_1$ by definition of anti-unification: for the new edge $(n,s_1,n_1)$, we have that $\lambda(n)[s_1] = g[s_1] = \lambda(n_1)$ and similarly for $(n,s_2,n_2)$. After \clustersimilar completes, for any root $r$, we have that $\lambda(r)$ is an expression over $\variables \setminus \paramvars \cup \Holes$ and $I_1$ holds.

    Function \generalizesimilar does not change the structure of the forest, but removes the holes from the labelings and replaces them with expressions over $\variables$, now including $\paramvars$. As an invariant of the loop, we have $I_2(n)$, which says that for all $n' \in \leaves(n)$, we have $\lambda(n)[s][p(n')] \equiv \lambda(n')$. Initially, this follows from $I_1$ and the fact that $\lambda(n)$ does not contain any parameters. It holds inductively during the loop, because at every iteration, one hole $h$ is replaced by an expression $\expr_{\synt}$ such that 
    \[
       \bigwedge_{n' \in \leaves(n), (n,s,n') \in \E^*} \forall \state,\inputval.\, \expr_{\synt}(p(n'),\state,\inputval) \equiv s(h).  
    \]
    At the end of \generalizesimilar, $\lambda(r)$ is free of holes and thus, $I_2$ implies $I_3$, which we define as $\forall r \in \roots(\F).\, \forall n' \in \leaves(n).\, \lambda(n)[p(n')] \equiv \lambda(n')$. 

    It is not hard to see that \generalizedissimilar maintains $I_3$. At every iteration of the loop, it introduces a new node $n$ with children $n_1$ and $n_2$ so that $\lambda(n)$ is equivalent to $\lambda(n_i)$ for any $p \in n(p_i)$.  Invariant $I_3$ follows by transitivity of the equivalence relation.   Finally, the algorithm returns $\lambda(r)$ for the single root $r$ and the proposition follow from $I_3$.
\end{proof}

If the expressions to be generalized contain multiplications with integer constants, the generalization algorithm may return a generalized expression that contains nonlinear integer arithmetic. 
For example, generalizing $2\cdot x$ for $p=2$ and $3 \cdot x$ for $p=3$ would result in the general expression $p\cdot x$, which is not linear.
To ensure the expression remain in a decidable logic we use a preprocessing step that rewrites any multiplication with integer constants into repeated addition. To generalize additions of different length, our algorithm uses the generalization of dissimilar expression and will essentially build a big case distinction. 

While this approach guarantees that all expressions remain in a decidable logic, it also means that a generalizer like $p \cdot x$ is never found. In  practice allowing non linear expressions can lead to better results, but it has to rely on an SMT solver's support for an undecidable logic which is not guaranteed to work. In our decidability proofs, we assume that generalization does not produce expressions in an undecidable fragment.

\paragraph{Optimizations}
To improve the performance of the generalization approach we use some additional optimizations.
Before using the generalization algorithm discussed above an SMT solver is used to check if the expression are equivalent in relevant context e.g. for all states inside the winning region.
If they are semantically equivalent (even if not syntactically) the first expression is used as the generalizer.

The generalization results are also cached. 
When a new parameter instance is added we verify if the result for the previous instances also generalizes the new instance in that case it is reused.
This is especially helpful as a counter example often invalidates only some parts of the strategy and the cache avoids computing generalizations which do not need to change.

As a preprocessing step expressions are normalized by rewriting inequalities into a normal form and sorting the parameters of commutative functions.
Consider the following expressions from linear integer arithmetic:
\begin{align*}
p=2&,\hspace{6pt} x \geq 2 \land 2 \geq y\\
p=3&,\hspace{6pt} y < 4 \land x \geq 3.
\end{align*}
All inequalities are rewritten to non-strict inequalities with the variables on the left hand side.
The arguments of the conjunction are sorted by lexicographically comparing the expressions:
\begin{align*}
p=2&,\hspace{6pt} x \geq 2 \land y \leq 2\\
p=3&,\hspace{6pt} x \geq 3 \land y \leq 3.
\end{align*}
These expressions are nicely aligned and can be generalized to $x \geq p \land y \leq p$.

\section{Verification}
\label{sec:proofs}

Besides a parameterized program, we need a proof of correctness.
The  programs produced by our approach are challenging. In initial experiments with existing verification tools \cite{DBLP:conf/cav/DietschHLP15,DBLP:journals/pacmpl/UnnoTGK23}, we could not verify, or  produce counterexamples even for simple systems.

We present a heuristic verification approach that is specialized to our setting.
Inspired by the proof witnesses used in software model checking \cite{Beyer22}, our approach distills the information from the concrete game solving algorithms into a symbolic proof candidate that we  check for consistency with the parameterized program.
If the proof fails, a new parameter value is produced as a counterexample.
In \cref{sec:decidability}, we prove that the verification problem for parameterized programs is undecidable. %
As we will see, deciding whether a witness is correct or not is decidable, but we cannot guarantee that we find the right witness.

\subsection{Invariants}
A strategy computed using GR(1) game solving consists of sub-strategies for each of the guarantees.
We use inductive invariants to prove that a strategy is safe.

\begin{definition} \label{def:invariant} 
Given a 
parameterized game $\game = (\paramvars,\statevars,\inputvars, \init, \envtrans, \actions, \wincond)$, such that $\systrans$ is the transition relation of $\actions$,
and a parameterized system $M = (\paramvars,\statevars\cup\{c\},\inputvars,\init\land c=1, \bigcup_{j=1}^{n} \hat{A}^j)$, such that $\hat{\systrans} = \bigvee_{j=1}^{n} \hat{\systrans}^j$,
a predicate $\invariant \in \exprs(\statevars)$ is an \emph{invariant} for $M$ and $\game$ iff it satisfies $\invpred$, which states that for each $j$, the conjunction of the following requirements holds.
\begin{align}
& init \rightarrow \invariant,\label{eq:inv_init}
\\ & (c = j \land \invariant \land \neg\justiceguarantee_j \land \envtrans \land \hat{\systrans}^j) \rightarrow (c' = j \land \invariant'),\label{eq:inv_preservation}
\\ & (c = j \land \invariant \land \justiceguarantee_j \land \envtrans \land \hat{\systrans}^j) \rightarrow (c' = j\oplus 1 \land \invariant'),\label{eq:inv_preservation_switch}
\\ & (c = j \land \invariant \land \envtrans \land \hat{\systrans}^j) \rightarrow \systrans\text{, and}\label{eq:inv_safe}
\\ & \exists \state',c': \invariant \land \envtrans \land \hat{\systrans}^j.\label{eq:inv_deadlock_free}
\end{align}
\end{definition}
\Cref{eq:inv_init} ensures that every initial state is safe, \cref{eq:inv_preservation,eq:inv_preservation_switch} ensure that a substrategy preserves the invariant when it remains in the same substrategy and when it switches to the next one, resp.
\Cref{eq:inv_safe} ensures that all actions performed by the strategy are allowed by $\game$, and
\Cref{eq:inv_deadlock_free} ensures that the strategy does not deadlock.

We  obtain the invariant $\invariant_p$ for $M_p$ for each parameter $p \in P'$ as the winning region from the game solving algorithm. We use the generalization algorithm from \cref{sec:generalization} to generalize the $\invariant_p$'s into an invariant candidate $\invariant$.

\subsection{Ranking Functions}\label{sec:rankingfunctions}
We use ranking functions to prove that the justice guarantees are satisfied infinitely often if the  assumptions are satisfied infinitely often.
The ranking functions are defined on the level of sub-strategies, where each substrategy has to ensure it eventually satisfies its associated guarantee, unless one of the assumptions is only satisfied a finite number of times. The invariant generation routine generalizes the invariants of each concrete system into a parameterized one. A first idea may be to similarly generalize the iterates $mX[j][r]$ for each $j$ and $r$. This idea, however, does not work as the number of iterates may be different for each parameter instantiation. Instead, we use syntax-guided synthesis to construct a ranking function that maps each state to a pair of integers. 

We define  a parameterized ranking function for each justice guarantee $\justiceguarantee_j$, where $r_j(p,x)$ contains the number of steps to fulfilling $\justiceguarantee_j$, and $l_j(p,x) = i$ iff $r_j$ is guaranteed to decrease when $\justiceassumption_i$ is satisfied.
\begin{definition}\label{def:rankingfunction}
Given a set of parameters $P'$, a parameterized game $\game$
 with invariant $\invariant$, and a parameterized system $M$ %
 with transition relation $\hat{\systrans} = \bigvee_{j=1}^{n} \hat{\systrans}^j$,  families $l=\{l_j\}_{j \in [1,n]} $ and $r = \{r_j\}_{j \in [1,n]}$ with $r_j : \params \times \states \to \mathbb{N}$ and  $l_j: \params \times \states \to [1,\nrass]$  form a \emph{$P'$-ranking function}
if they satisfy the predicate $\rankpred$, which states that for all $p \in P'$ and $j \in [1,n]$, we have
\begin{align}
&\invariant \rightarrow r_j(\paramval,\state) \geq 0, \label{eq:rankpos}\\
&\textstyle(\invariant \land \neg \justiceguarantee_j \land \envtrans \land  \hat{\systrans}^j \land \bigvee_{i\in[1,\nrass]} (l_j(\paramval,\state)=i \land \justiceassumption_i')) \rightarrow r_j(\paramval,\state) > r_j(\paramval,\state'),\label{eq:rankdecrease}\\
&(\invariant \land \neg \justiceguarantee_j \land \envtrans \land  \hat{\systrans}^j) \rightarrow r_j(\paramval,\state) \geq r_j(\paramval,\state')\text{, and}\label{eq:ranknondecrease}\\
&(\invariant \land \neg \justiceguarantee_j \land \envtrans \land \hat{\systrans}^j \land r_j(\paramval,\state) = r_j(\paramval,\state')) \rightarrow l_j(\paramval,\state) \geq l_j(\paramval,\state').\label{eq:ranknextassump}
\end{align}
A \emph{$P'$-ranking} function is a \emph{ranking function} if $P' = P$.
\end{definition}

We use a syntax-guided synthesis solver to search for small candidate ranking functions $(r_j,l_j)$ that are valid for a set of concrete sub-strategies.
The function\\ $\textsc{SynthesizeRankingFunction}(\textit{sub-strategies})$ used in \cref{alg:param_synthesis} can be implemented as\\ $\textsc{FuncSynth}(r_{\synt}, l_{\synt}, \bigwedge_{(\paramval, \hat{\systrans}^j) \in \textit{sub-strategies}} \forall \state,\inputval,\state': \rankpred)$.
This quantifier-free synthesis query uses the same background theory as the specification and is thus decidable iff the background theory is.
The following proposition shows that  a ranking function exists for any  concrete system. 
\begin{proposition}\label{prop:existenceofrankingfunction}
For any parameterized GR(1) specification $\game$ and any concrete system $M_p$ that is synthesized from $\game_p$,  a ranking function exists.
\end{proposition}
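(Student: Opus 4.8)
The plan is to read a $\{p\}$-ranking function directly off the fixpoint iterates that the concrete GR(1) synthesis already computed for $M_p$. Since $M_p$ was synthesized from $\game_p$, \cref{alg:gr1fixpoint} terminated and returned, for each guarantee $j$, the iterates $mY[j][r]$ (for $r \in [0,\maxr_j]$) and $mX[j][r][i]$, instantiated for the fixed $p$. For a state $x \models \invariant$ I would set $r_j(p,x)$ to the least $r$ with $x \models mY[j][r]$, the $mY$-distance to $\justiceguarantee_j$, and $l_j(p,x)$ to the least $i$ with $x \models mX[j][r_j(p,x)][i]$, i.e. the index of the cell $mX[j][=(r,i)]$ that contains $x$. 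Outside $\invariant$ both functions may take an arbitrary default, since every conjunct of $\rankpred$ is guarded by $\invariant$. It then remains to check the four conjuncts of $\rankpred$ for $P'=\{p\}$.

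Three of the four conjuncts follow directly from the shape of the strategy guard $\gamma(j,\action)$ in \cref{eq:simplestrategy}. Non-negativity (\cref{eq:rankpos}) is immediate because $r_j$ has codomain $\mathbb{N}$. For non-increase (\cref{eq:ranknondecrease}), note that under the hypothesis $\neg\justiceguarantee_j$ the disjunct $\gamma_1$ is disabled, so every enabled substrategy-$j$ action from a state of rank $r$ is either a $\gamma_2$ action, whose precondition $\wp(mY[j][<r],\action)$ forces $x' \models mY[j][<r]$ and hence $r_j(p,x') < r$, or a $\gamma_3$ action, whose precondition $\wp(mX[j][r][i],\action)$ forces $x' \models mX[j][r][i] \subseteq mY[j][r]$ and hence $r_j(p,x') \le r$. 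For the assumption-ordering conjunct (\cref{eq:ranknextassump}), when the rank is preserved the step was a $\gamma_3(j,r,i,\action)$ action landing in $mX[j][r][i]$; because the cells $mX[j][=(r,i')]$ partition $mY[j][r] \setminus mY[j][<r]$ by least index, the successor sits in a cell with $i' \le i$, giving $l_j(p,x) \ge l_j(p,x')$.

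The strict-decrease conjunct (\cref{eq:rankdecrease}) is where the real work lies, and I expect it to be the main obstacle. The intended mechanism is the progress argument already sketched for \cref{alg:gr1fixpoint}: the ``stay in place'' guard $\gamma_3(j,r,i,\action)$ carries the literal $\neg\justiceassumption_i$, so once the active assumption $i = l_j(p,x)$ holds the system can no longer linger in $mX[j][r][i]$ and is driven onto a progress ($\gamma_2$) action, whose existence is guaranteed by deadlock-freedom (\cref{eq:inv_deadlock_free}); that action drops the $mY$-rank. The delicate point is that \cref{eq:rankdecrease} reads the assumption in the successor ($\justiceassumption_i'$) whereas $\gamma_3$ tests it in the current state: a $\gamma_3$-successor $x'$ that satisfies $\justiceassumption_i$ must land in the base set $start_r = (\justiceguarantee_j \land \coax\invariant) \lor \coax(mY[j][r-1])$, and one has to show that such an $x'$ already has $mY$-rank strictly below $r$. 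This is clear in the base case $r=0$, where $x'$ satisfies $\justiceguarantee_j$ and receives rank $0$, but for $start_r$-states reached through $\coax(mY[j][r-1])$ the plain $mY$-index need not drop, since such states may themselves sit at rank $r$.

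To close this gap I would refine the measure rather than abandon the approach: either shift the index by one so that guarantee states sit strictly below the ``survival'' cells, or define $r_j$ through one-step controlled reachability into $mY[j][<r]$, so that membership in $start_r$ is reflected as a strictly smaller rank, and then re-run the three easy checks of the previous paragraph against the refined measure. Because the entire construction uses only the concrete iterates for the single value $p$, it produces a $\{p\}$-ranking function, which is exactly what the generalization step and \cref{prop:progress} rely on.
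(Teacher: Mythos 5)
Your construction is the same one the paper uses: read $r_j$ off the $mY$ iterates and $l_j$ off the $mX$ cells, and your verification of \cref{eq:rankpos,eq:ranknondecrease,eq:ranknextassump} is sound. You have also put your finger on exactly the point that the paper's own proof glosses over with ``by the construction of the game solving algorithm'': \cref{eq:rankdecrease} evaluates the assumption at the \emph{successor}, while $\gamma_3$ tests it at the \emph{current} state, so a lingering step may end in a state of the same $mY$-index that satisfies $\justiceassumption_{l_j(\state)}$.

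However, the repairs you sketch do not close this gap. Writing $\mathit{start}_r = (\justiceguarantee_j \land \coax \invariant) \lor \coax\, mY[j][r-1]$, consider states $\state,\state'$ with $\neg\justiceguarantee_j$, both of $mY$-index $r\geq 1$ and both in $\mathit{start}_r$ (say both in $\coax\, mY[j][r-1]$ but outside $mY[j][r-1]$), hence both in cell $mX[j][=\!(r,1)]$, with $\justiceassumption_1$ false at $\state$ and true at $\state'$. A $\gamma_3(j,r,1,\action)$ step from $\state$ to $\state'$ is permitted: $\gamma_3$ only checks $\neg\justiceassumption_1$ at $\state$, the successor merely has to stay in $mX[j][r][1]$, which $\mathit{start}_r$-states do, and the strategy is nondeterministic, so this step can be taken even though a $\gamma_2$ action is simultaneously enabled. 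Now $l_j(\state)=1$ and $\justiceassumption_1'$ holds, so \cref{eq:rankdecrease} demands a strict drop, yet under either of your repairs (shifting guarantee states down, or ranking $\mathit{start}_r$-membership strictly below the rest of index $r$) $\state$ and $\state'$ receive the same rank. (For the same reason, your remark that the base case $r=0$ is ``clear'' is wrong for the unshifted index: there $\state'$ satisfies $\justiceguarantee_j$ but both states have rank $0$.) The repair that works is to refine once more and rank lexicographically by the triple consisting of the $mY$-index, the cell index, and the indicator $\bigl[\state \not\models \justiceassumption_{l_j(\state)}\bigr]$, flattened into $\mathbb{N}$: if $\justiceassumption_i$ holds at $\state$ then $\gamma_3$ is disabled and every enabled step is a $\gamma_2$ step, dropping the first component; if it fails at $\state$ and holds at $\state'$, then $\state'$ lands in $\mathit{start}_r$, hence in cell $(r,1)$ with its own assumption true, so the second component drops when $i>1$ and the third drops from $1$ to $0$ when $i=1$. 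With this measure your three easy checks go through unchanged and \cref{eq:rankdecrease} holds as well; note that the paper's one-line justification, which uses the plain iterate index, needs the same correction.
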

\begin{proof}
    Let $\sigma$ be the transition relation of $M_p$ and $\invariant$ be an invariant. Consider a  substrategy $\hat{\systrans}^j_p$ in $M_p$.
    The invariant $\invariant$ and the ranking sequence $\{\invariant^j_k\}_{k \in [1, L_j]}$ computed by the \textsc{GR(1)-Synthesis} algorithm (\cref{sec:gr1}) for $M_p$ can be used to define $r_p^j(x) = k \leftrightarrow x \models \invariant^j_k \setminus \invariant^j_{k-1}$.
    Let %
    $\{X^i_k\}_{i=1}^{\nrass}$ be a partitioning of $\invariant^j_k \setminus \invariant^j_{k-1}$ such that for every state in $X^i_k$ the assumption $\justiceassumption_i$ guarantees progress.
    Construct $l^j_p(x) = i \leftrightarrow x \in \bigcup_{k\in[1, L_j]} X^i_k$.
    By the construction of the game solving algorithm, $\invariant$ and ranking function $(r_p^j, l^j_p)$ satisfy $\invpred$ and $\rankpred$, resp., for parameter set $\{p\}$.
\end{proof}

The previous proposition proves that the standard way to synthesize a GR(1) system immediately gives us a ranking function. 
However, this ranking function has some problems when it comes to generalization: it only works if the number of steps required is less than or equal than in the concrete systems used to construct it, the decrease in each step cannot be larger than $1$, and it relies on a large case distinction which does not generalize well.
\Cref{def:rankingfunction} allows for a larger set of functions: we may construct a correct ranking function that is not a generalization of the ranking functions of the individual systems in the sense of \cref{prop:existenceofrankingfunction}. We have found that in practice, it is easier to find simple ranking functions in this larger set.

Note also that there may not be any  ranking function for a given correct parameterized system. However, \emph{if} we find an invariant and  ranking function for a parameterized system, we have proven its correctness. 

We use the following grammar to synthesize ranking functions separately for each substrategy.
The root non-terminals are $R$ for $r_j$ and $L$ for $l_j$. 
\begin{align*}
      R &::= S \mid S + R
    \\S &::= V \mid abs(T - T)
    \\T &::= X \mid X + T
    \\X &::= V \mid 1
    \\L &::= J \mid ite(B, L , L)
    \\B &::= I \leq I \mid \statevars_{\mathbb{B}}
    \\I &::= I + I \mid I - I \mid V \mid  0 \mid 1 
    \\J &::= 1 \mid \dots \mid \nrass
    \\V &::= \paramvars \mid \statevars_{\mathbb{Z}} \mid \lfloor \statevars_{\mathbb{R}} \rfloor
\end{align*}
The grammar for $R$ resembles linear ranking functions\cite{DBLP:conf/pods/SohnG91, DBLP:conf/vmcai/PodelskiR04, DBLP:conf/tacas/ColonS01}, but restricted to positive integer coefficients. It is also extended with an absolute difference operation, because reducing the distance to a target is common pattern in reactive systems. The grammar for $L$ encodes decision trees to select one of the available assumptions.
For specifications with zero or one assumptions we only synthesize $r$ and hard code $l$ to point at the single assumption or at an artificial assumption $True$.
The implementation sacrifices the theoretical progress guarantee for faster results and more scalability in practical cases.
For progress and termination arguments we assume a grammar that can also encode the construction in \cref{prop:existenceofrankingfunction} via case distinction.

\subsection{Consistency}
We define $\textsc{CheckConsistency}(\game, M, \invariant, (l,r))$ to return true if 
\[\game, M,  \models \invpred \wedge \rankpred,\]
and a counterexample otherwise. A counterexample is a parameter instantiation $p \in P$ such that $\game_p, M_p,  \not\models \invpred$ or $\game_p, M_p,  \not\models \rankpred$. Since the invariant and the ranking function are each correct for our current set of parameters $P'$, for any counterexample $p$, we know that  $p \notin P'$. We perform this check by verifying $\invpred$ and $\rankpred$ using an SMT solver. Both are the quantifier-free predicates  over the underlying theory with free variables in $\variables$ (including $\paramvars$). The following proposition shows that the formula is decidable and indeed checks whether the system is correct. 

\begin{proposition}\label{prop:checkconsistency}
Suppose the grammar for the invariants and ranking functions are included in a decidable logic, then checking consistency is decidable. 
If $\textsc{CheckConsistency}(\game, M,\invariant, (l,r))$ returns true, then $M \models \game$.
\end{proposition}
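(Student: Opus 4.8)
The plan is to prove the two claims of \cref{prop:checkconsistency} separately, namely (i) decidability of the consistency check and (ii) soundness, i.e.\ that a positive answer implies $M \models \game$.

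For decidability, I would argue that both $\invpred$ (from \cref{def:invariant}) and $\rankpred$ (from \cref{def:rankingfunction}) are, after the ranking function and invariant grammars have been instantiated with concrete synthesized expressions, quantifier-free formulas over $\variables$ (including $\paramvars$) in the underlying theory. The only subtlety is \cref{eq:inv_deadlock_free}, which contains an existential quantifier $\exists \state', c'$, and the analogous fact that $\rankpred$ quantifies over $\state,\inputval,\state'$ in the synthesis query but is checked as an implication whose universal closure is valid iff its negation is unsatisfiable. By \cref{ass:decidablelogic} the theory is linear integer arithmetic, real arithmetic, or the mixed fragment, each of which admits quantifier elimination; hence the one existential in \cref{eq:inv_deadlock_free} can be eliminated and the resulting quantifier-free validity (equivalently, unsatisfiability of the negation) is decidable by an SMT solver. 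I would also note the hypothesis of the proposition: that the invariant and ranking grammars live in a decidable logic, so that substituting the synthesized $\invariant$, $r_j$, $l_j$ does not push the formula outside that logic (this is exactly the concern raised in the discussion after \cref{prop:generalization} about $p \cdot x$, and we are explicitly assuming it away here).

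For soundness, the core work is to show that $\game, M \models \invpred \wedge \rankpred$ entails $M \models \game$ in the sense of \cref{def:gr1semantics}/the infinite-state definition. I would split this into the two semantic obligations. First, \emph{safety and deadlock-freedom}: \cref{eq:inv_init} gives that every initial state satisfies $\invariant$; \cref{eq:inv_preservation,eq:inv_preservation_switch} give that $\invariant$ is inductive along $\hat{\systrans}^j$ whether or not $\justiceguarantee_j$ holds (tracking the counter $c$ correctly, so $c=j \Rightarrow c'=j$ or $c'=j\oplus1$); \cref{eq:inv_safe} gives that every move allowed by $\hat{\systrans}^j$ is a move of $\game$'s transition relation $\systrans$; and \cref{eq:inv_deadlock_free} gives that from any reachable state satisfying $\invariant$ with environment input consistent with $\envtrans$ there is an enabled successor, so no deadlock occurs. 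A straightforward induction on trace length then shows every reachable state satisfies $\invariant$ and the no-deadlock condition of the semantics. Second, \emph{progress/fairness}: assuming each $\justiceassumption_i$ holds infinitely often, I must show each $\justiceguarantee_j$ holds infinitely often. Fix $j$ and suppose for contradiction that from some point on $c=j$ forever with $\justiceguarantee_j$ never satisfied. Then \cref{eq:ranknondecrease} says $r_j$ never increases, so it converges to some value; \cref{eq:rankpos} bounds it below by $0$; \cref{eq:ranknextassump} says that whenever $r_j$ stays constant, $l_j$ is non-increasing, hence $l_j$ too stabilizes at some index $i^*$; but once $l_j = i^*$ is fixed and $\justiceassumption_{i^*}$ is satisfied infinitely often, \cref{eq:rankdecrease} forces a strict decrease of $r_j$ at such a step, contradicting convergence of the nonnegative integer $r_j$. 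Therefore $\justiceguarantee_j$ is eventually reached, the counter advances to $j\oplus1$, and by the round-robin structure every guarantee is met infinitely often.

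The main obstacle I expect is the progress argument, specifically making the ``$l_j$ eventually stabilizes, then $r_j$ must strictly decrease'' reasoning rigorous in the presence of the counter resets: one has to be careful that when $c$ switches at $r_j=0$ (the guarantee is reached) the ranking for the \emph{new} substrategy is a fresh nonnegative integer, so the argument is really a well-foundedness argument applied separately within each maximal interval during which $c$ is constant, combined with the fairness hypothesis on the $\justiceassumption_i$. A secondary subtlety is aligning the two subtly different statements of $M \models \game$ in the excerpt (the finite-trace deadlock formulation in \cref{def:gr1semantics} and the infinite-state formulation); I would work with the infinite-state definition since $M$ may be infinite state, and observe that \cref{eq:inv_deadlock_free} is exactly its no-deadlock clause. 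The remaining steps — the inductive safety proof and the reduction of validity to SMT unsatisfiability — are routine given \cref{ass:decidablelogic}.
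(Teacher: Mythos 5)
Your proposal is correct and follows essentially the same route as the paper's proof: decidability from the grammars yielding formulas in a decidable logic, safety and deadlock-freedom from \cref{eq:inv_init,eq:inv_preservation,eq:inv_preservation_switch,eq:inv_safe,eq:inv_deadlock_free}, and progress by the same contradiction argument (a nonnegative, non-increasing $r_j$ must stabilize, forcing $l_j$ to stabilize at some $i^*$ by \cref{eq:ranknextassump}, whereupon \cref{eq:rankdecrease} and fairness of $\justiceassumption_{i^*}$ yield the contradiction). Your added details --- eliminating the existential in \cref{eq:inv_deadlock_free} via quantifier elimination, and treating the ranking argument within each maximal interval where $c$ is constant --- are elaborations the paper leaves implicit, not a different approach.
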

\begin{proof}
    Checking consistency is decidable by the fact that the grammar results in decidable formulas.
    
    We want to prove that if \textsc{CheckConsistency} returns true, then for any $p$, $M_p \models \game_p$, as defined in \cref{def:gr1semantics}. Let $\pi = x_0, \inputval_0, x_1, \inputval_1, \dots$ be a trace of $M_p$ and assume that for all $i$ we have that  $x_i, \inputval_{i+1},x_{i+1} \models \envtrans$. First, by \cref{eq:inv_init,eq:inv_preservation,eq:inv_preservation_switch}, $\invpred$ ensures that $\invariant$ is an inductive invariant. By \cref{eq:inv_deadlock_free}, $M_p$  is deadlock-free, and by \cref{eq:inv_safe}, the system transition constraint holds.

    It remains to prove that the GR(1) condition is fulfilled. The crucial point is that when the system applies substrategy $\hat{\systrans}^j$, there cannot be an infinite trace $\pi = x_0, \inputval_0, x_1, \inputval_1, \dots$ that always fulfills the environment transition relation, on which the all justice assumptions are fulfilled infinitely often and justice guarantee $j$ is not fulfilled at all. By contradiction, suppose that $\pi$ is such a trace. By \cref{eq:rankpos,eq:ranknondecrease}, $r_j$ is at least zero and monotonically decreasing . Hence, it must remain constant from some point on. By \cref{eq:ranknextassump}, this means that $l_j$ is also constant from some point on. Assume  that  $l_j(p,x_k) = i^*$ for all $k \geq k^*$. Now by \cref{eq:rankdecrease}, $r_j(p,x_k)$ decreases whenever $x_k \models \justiceassumption_{i*}$, which is infinitely often. This contradicts the fact that $r$ is constant.
\end{proof}
This concludes that a positive answer from $\textsc{CheckConsistency}$ implies that the generalized system is correct. On the other hand, if the check fails, then for some $p$ either $\invariant$ is  not an invariant or $(l,r)$ is not a ranking function. We extract this $p$ from the satisfying assignment that the SMT solver returns and use it to refine our result.

\begin{example}\label{ex:ranking}

\newcommand{\exa}{\{(0,2),(1,3)\}}
We ended \cref{ex:generalization} with
$$M_{\exa}: (x+d\geq \max-1, x'=x-1+d), (x+d < \max-1, x'=x+d)$$
with invariant $\invariant_{\exa}: \min \leq x \leq \max$. 
For this system and  assumption $\infoften d < 0$, we compute the candidate ranking function $l(x) = 0$, $r(x) = x$. The generalized system does not fulfill the ranking function. For instance, for parameters $(\min,\max) = (-4,0)$, if $x=-3$ and $d =1$, the value of $r$ is negative, which contradicts \cref{eq:rankpos}.

\newcommand{\exb}{\{(0,2),(1,3),(-4,0)\}}
   
   Based on the counterexample, we build a \textbf{third substrategy} $M_{(-4,0)}: (x+d \geq -3, x' = x-1+d), (x+d < -3, x' = x+d)$ with invariant $\invariant_{(-4,0)}: -4 \leq x \leq 0$. We generalize the three concrete strategies to 
    $$M_{\exb}: (x+d\geq min+1, x'=x-1+d), (x+d<min+1, x'=x+d),
    $$
     invariant $\invariant_{\exb}: \min \leq x \leq \max$
    and ranking function: $l(x) = 0$, $r(x) = x-\min$.
    This corrects both the guard condition and the ranking function, which results in successful verification and a working system, which makes us happy.

Note that that the number of steps to reach $x = min$ depends on $max-min$. Thus, each concrete instances has a different number of iterates to reach the goal and for the general program, $r$ is not bounded. Thus, we cannot use a straightforward generalization of the iterates for each parameter value.
\end{example}

\section{Decidability}\label{sec:decidability}
\begin{proposition}
    The verification problem for parameterized infinite-state reactive synthesis are undecidable, even if all concrete problem instances are %
    decidable. 
\end{proposition}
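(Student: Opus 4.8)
The plan is to reduce from the \emph{non-halting} problem for two-counter (Minsky) machines, which is undecidable since it is the complement of the halting problem. Given a Minsky machine $\mathcal{M}$ with counters $c_1,c_2$ and finite control, I would construct, entirely within linear integer arithmetic, a parameterized system $M$ and a GR(1) specification $\game$ with a single parameter $p \in \mathbb{Z}$ that acts as a \emph{step bound}, arranged so that $M_p \models \game_p$ holds for every $p$ if and only if $\mathcal{M}$ never halts. This matches the shape of the claim: the parameterized question is a universal quantification over all $p$, whereas each fixed-$p$ instance will be finite-state and hence decidable.

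First I would encode a configuration of $\mathcal{M}$ in the state: the control location $\ell$ as a bounded integer together with the counters $c_1,c_2$ and a step counter $t$, all in $\statevars_{\mathbb{Z}}$. The initial predicate fixes $\ell = \ell_0$ and $c_1 = c_2 = t = 0$. Each action simulates one instruction of $\mathcal{M}$ (increment, decrement, zero-test-and-branch) and increments $t$; all of these are expressible in linear integer arithmetic. To respect the deadlock-freedom requirement of GR(1), I would guard the simulating actions by $t < p$ and add a stuttering self-loop that is enabled once $t \geq p$ (or once the halt location is reached), so that every state has a successor. The environment plays no role, so I take $\inputvars = \emptyset$ and $\envtrans = \true$, turning the game into a closed transition system; the reduction then does not depend on any environment behavior.

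Next I would make halting observable as a safety violation: I add the guarantee that the halt location is never entered, which is expressible in GR(1) by the device noted after \cref{def:gr1semantics}, and I take a single trivially true justice guarantee for the liveness part so that the stuttering run is accepted. If $\mathcal{M}$ halts within $p$ steps, the simulation reaches $\ell_{\text{halt}}$ at some step $t^\ast \leq p$ and the safety guarantee fails, so $M_p \not\models \game_p$; if $\mathcal{M}$ does not halt within $p$ steps, the simulation runs for $p$ steps and then stutters forever without ever entering $\ell_{\text{halt}}$, so $M_p \models \game_p$. Hence $M \models \game$, i.e. $M_p \models \game_p$ for all $p$, holds precisely when $\mathcal{M}$ halts within $p$ steps for no $p$ — that is, when $\mathcal{M}$ never halts — which is undecidable.

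Finally I would establish the ``even if'' clause by arguing that every concrete instance is decidable. For a fixed $p$, the guard $t < p$ together with the structural invariant $c_1,c_2 \leq t \leq p$ bounds all integer variables, so the reachable state space of $M_p$ is finite and concrete GR(1) verification reduces to the decidable finite-state case. The main obstacle is not any single calculation but getting the encoding to cohere with the GR(1) game semantics: guaranteeing deadlock-freedom through the stutter loop, and verifying that the ``for all $p$'' quantification of the parameterized verification problem aligns exactly with the (co-recursively-enumerable) divergence of $\mathcal{M}$. As an aside, the same effect can be obtained by reduction from Hilbert's tenth problem, exploiting the paper's own observation that a parameterized family may express nonlinear relations such as $x \cdot p_1 \leq p_2$ whose concrete instances are all linear: there, iterated addition computes the target polynomial and $p$ bounds the candidate solution so that each instance again stays finite-state.
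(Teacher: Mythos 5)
Your proposal is correct and takes essentially the same approach as the paper: both simulate a two-counter Minsky machine in linear integer arithmetic with the parameter acting as a step bound (so every concrete instance is finite-state and decidable), add a stuttering loop for the halt/timeout states, and use the specification "never reach the halt location," so that the parameterized system satisfies the specification for all parameter values iff the machine never halts. The only differences are cosmetic — you phrase the reduction from non-halting where the paper phrases it from halting, and you spell out the finiteness argument for concrete instances that the paper leaves implicit.
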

\begin{proof}
    Sketch: We reduce the halting problem of 2-counter machines, which is undecidable, to parameterized synthesis and to the verification problem.
    We use a parameter $\max$ to limit number of instructions that is executed. %
    We use Minsky machines, 2-counter machines with the instruction set $\mathit{Inc}(r)$ (increment register $r$) and $\mathit{JZDec}(r,z)$ (if register $r$ is zero go to instruction $z$ otherwise decrement $r$ and go to the next instruction) \cite{Minsky61,DBLP:books/daglib/0008455}.
    
    From a 2-counter machine $C$, we construct a parameterized system $M^{C}$ as follows. System  $M^{C}$  has a parameter $\max$, state variables $r_1, r_2$ for the two registers, an instruction pointer $l$, and runtime counter $c$, which are all natural numbers. The instruction pointer may point to an line in $C$, equal  $\mathit{Halt}$ (the machine has terminated) or $\mathit{Loop}$ (the machine has not terminated within $\max$ steps).
    The system contains an action for each line $k$ of $C$ with guard $l = k$ that updates $r_1$, $r_2$, and $l$, and increments $c$.
    When $c = \max$, we set $l = \mathit{Loop}$; if the machine reaches the end of the program, $l$ is set to $\mathit{Halt}$. If $l \in \{\mathit{Loop}, \mathit{Halt}\}$, all variables retain their value.
    The  specification states that  $l \neq \mathit{Halt}$.

    The counter machine $C$ halts within $n$ steps iff the concrete system $M^{C}_n$ violates the specification. Thus, $C$ halts iff the parameterized system $M^{C}$ violates the specification, which concludes the reduction.

\end{proof}

\begin{proposition}
    The  realizability problems for parameterized infinite-state reactive synthesis are undecidable, even if all concrete problem instances are %
    decidable. 
\end{proposition}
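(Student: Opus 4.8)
The plan is to adapt the counter-machine reduction from the previous proposition so that it targets realizability rather than verification. The essential observation is that the system $M^{C}$ built from a Minsky machine $C$ is \emph{deterministic}: at every reachable state exactly one guard is enabled and the update is fixed. I would therefore recast $M^{C}$ not as a fixed system but as the action set $\actions$ of a parameterized GR(1) specification $\game^{C}$, so that the system player has no genuine freedom—it may only take the unique enabled action or strengthen a guard at the cost of a deadlock. With this encoding, realizability of $\game^{C}$ collapses onto the question of whether the forced simulation satisfies the specification, which is exactly the verification question already shown undecidable.

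Concretely, first I would keep the state variables $r_1, r_2, l, c$ and the parameter $\max$, together with one action per line $k$ of $C$ guarded by $l = k$, as well as the bookkeeping that sets $l = \mathit{Loop}$ when $c = \max$, sets $l = \mathit{Halt}$ at the end of the program, and self-loops on every variable once $l \in \{\mathit{Loop},\mathit{Halt}\}$. The safety goal $\always\, l \neq \mathit{Halt}$ is encoded following the remark after \cref{def:gr1semantics}, by conjoining $l \neq \mathit{Halt}$ to every guard; reaching $\mathit{Halt}$ then disables every action and produces a deadlock, which violates condition (1) of the GR(1) semantics. I would take an empty input alphabet, no justice assumptions, and a single trivially-true guarantee $\infoften\, \true$, so that the winning condition reduces exactly to deadlock-freedom plus the safety constraint.

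Next I would establish the two directions. Because the guards partition the reachable states and the updates are fixed, any system admitted by the synthesis problem (condition (2) of the parameterized synthesis problem) must, at each reachable state, either follow the unique enabled action or deadlock; since deadlocking is losing, the system is forced to run the full simulation. As the halt transition fires strictly before $c$ reaches $\max$ exactly when $C$ halts within $\max$ steps, $\game^{C}_{n}$ is realizable iff the forced run with $\max = n$ stalls in $\mathit{Loop}$ rather than reaching $\mathit{Halt}$, i.e. iff $C$ does not halt within $n$ steps. Since a parameterized system must be correct for \emph{all} $p$ and the forced simulation $M^{C}$ is itself a uniform witness whenever $C$ never halts, the parameterized specification $\game^{C}$ is realizable iff $C$ does not halt. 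As the halting problem is undecidable, parameterized realizability is undecidable. For decidability of every concrete instance I would observe that with $\max = n$ fixed the reachable values satisfy $r_1, r_2, c \leq n$, so $\game^{C}_{n}$ is effectively finite-state and concrete GR(1) synthesis terminates.

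The main obstacle I anticipate is making the ``no cheating'' argument watertight: I must ensure the system player cannot evade $\mathit{Halt}$ by strengthening a guard or by exploiting some alternative enabled action, and that the only escape—an early deadlock—is itself losing under \cref{def:gr1semantics}. This hinges on the guards being mutually exclusive and exhaustive on reachable states and on the updates being uniquely determined, so that refining the specification's actions genuinely leaves the system no winning choice once $C$ halts, while the $\mathit{Loop}$ self-loop guarantees a winning infinite run whenever it does not.
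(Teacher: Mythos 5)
Your proof is correct and follows essentially the same reduction as the paper: the same Minsky-machine encoding with parameter $\max$, and the same key observation that condition (2) of the synthesis problem only permits guard strengthening, which—combined with deadlock-freedom—forces any synthesized system to coincide with the forced simulation $M^{C}$, so realizability collapses to the undecidable halting question. One remark: your polarity (realizable iff $C$ does \emph{not} halt) is the correct one; the paper's concluding sentence reads ``realizable iff $C$ halts,'' which appears to be a typo, though undecidability follows either way since the complement of the halting problem is equally undecidable.
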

\begin{proof}
    For the reduction to parameterized realizability, we use the same  specification and the same set of actions as in the previous proposition. There is effectively only one machine $M^C$ that fulfills the specification. 
    A synthesized system $M'$ uses actions with guards that imply a guard of an action that is included in the specification. Thus, an action in $M'$ can only be used for the line for which it is intended. 
    Further limiting a guard is not useful (except to exclude unreachable states), because $M'$ is not allowed to deadlock.     
    Thus, any synthesized system is equivalent to $M^C$, and the specification is realizable iff $C$ halts.
\end{proof}

\section{Evaluation}\label{sec:evaluation}
We have implemented our approach using Python in the prototype tool \toolname
\footnote{Artifact available online: \url{doi.org/10.5281/zenodo.15208031}}.
We use the gr1mt library \cite{DBLP:conf/isola/MaderbacherWB24} to solve concrete GR(1) games. Gr1mt uses the Z3 solver \cite{demoura2008} for quantifier elimination.
The syntax guided synthesis problems are handled by the CVC5 \cite{DBLP:conf/tacas/BarbosaBBKLMMMN22} solver.

Our implementation contains several optimizations, which we summarize here. 
If all guards in the specification share the same constraint, we use this constraint as a candidate for the general invariant.
Furthermore, if the guards for individual actions in the specification do not depend on the parameter, we use the first synthesized concrete system as a candidate for the generalized system. 
In all other cases, we initialize $P'$ with two valuations that are different for every parameter variable.

\subsection{Parameterized Synthesis Benchmarks}
We present a  set of new  parameterized specifications.
We compare our tool \toolname  with the infinite-state synthesis tools  Issy \cite{DBLP:journals/corr/abs-2502-03013} and Sweap \cite{azzopardi2025}, adding the parameters as state variables that are chosen by the environment at the first time step and remain constant.
We do not compare to gensys-ltl \cite{SamuelDK21,DBLP:conf/kbse/SamuelDK23} or Raboniel \cite{DBLP:conf/fmcad/MaderbacherB22}, as these tools cannot handle unbounded strategy loops, i.e., the situation in which the number of steps to reach a liveness constraint cannot be bounded a-priori, but depends on the value of unbounded variables.
Likewise, Temos \cite{ChoiFPS22} can only handle unbounded strategies that repeat a single statement and use only predicates from the specification. Also, both Issy and sweap outperform it.
We compare to Issy instead of rpgsolve \cite{DBLP:journals/pacmpl/HeimD24}, rpg-STeLA \cite{DBLP:conf/cav/SchmuckHDN24}, and tslmt2rpg \cite{DBLP:journals/pacmpl/HeimD25}, as these algorithms are all implemented in the Issy tool by the same authors. For Issy we test both realizability checking and synthesis and acceleration with uninterpreted functions and geometric acceleration. For each instance, we report the results for the faster acceleration configuration for both synthesis and realizability.

We summarize the benchmarks and experimental results in \cref{tab:param_synth_benchmarks}.
For each benchmark, we include the domain of the state and input variables as $\mathbb{Z}$ or $\mathbb{R}$, the number of parameters, state variables, and input variables, as well as the number of justice assumptions and guarantees.
The result columns lists the runtime required by each tool for synthesis or ``TO'' when the time out was exceeded.
The column $|P'|$ shows how many concrete systems where required by \toolname to construct the parameterized system.

\begin{table}
    \centering
    \caption{Parameterized synthesis benchmarks. Timeout: 5 minutes}
    \label{tab:param_synth_benchmarks}
    \small%
    \begin{tabular}{lllllll|rr|rrr}
    \toprule
        \textbf{Benchmark} & \textbf{T.} & $\mathbf{|\paramvars|}$ & $\mathbf{|\statevars|}$ & $\mathbf{|\inputvars|}$ & $\mathbf{\nrass}$ & $\mathbf{\nrgua}$ & $\mathbf{|P'|}$ & \textbf{Prasanva}  & \textbf{sweap} & \textbf{Issy (r.)} & \textbf{Issy (s.)} \\
        \midrule
        intro example                   & $\mathbb{Z}$ & 2 &  1 & 1 & 2 & 2 & 3  &    3.8s  &  TO     &  TO  & TO  \\
        intro example v2                & $\mathbb{Z}$ & 2 &  1 & 1 & 1 & 2 & 3  &    4.9s  &  TO     &  TO  & TO \\ %
        intro example r.                & $\mathbb{R}$ & 2 &  1 & 1 & 2 & 2 & 4  &    6.6s  &  ---      &  TO  & TO \\
        intro example r. v2             & $\mathbb{R}$ & 2 &  1 & 1 & 1 & 2 & 3  &    5.3s  &  ---      &  TO  & TO \\ %
        minimal example                 & $\mathbb{Z}$ & 1 &  1 & 1 & 1 & 2 & 3  &    1.5s  &  2.4s   &  TO  & TO \\
        distinct strategies             & $\mathbb{Z}$ & 1 &  1 & 1 & 1 & 2 & 3  &    1.7s  &  TO     &  TO  & TO \\
        win by assume violation         & $\mathbb{Z}$ & 2 &  1 & 0 & 1 & 2 & 3  &    3.2s  &  10.8s  &  TO  & TO \\
        single gate                     & $\mathbb{Z}$ & 2 &  1 & 1 & 1 & 2 & 3  &    4.3s  &  TO     &  TO  & TO \\
        double gates                    & $\mathbb{Z}$ & 2 &  1 & 2 & 2 & 2 & 4  &   14.6s  &  TO     &  TO  & TO \\
        two lanes gate                  & $\mathbb{Z}$ & 1 &  2 & 2 & 2 & 2 & 2  &    8.2s  &  TO     &  TO  & TO \\
        gate wind                       & $\mathbb{Z}$ & 1 &  1 & 2 & 2 & 2 & --  &      TO  &  TO     &  TO  & TO \\
        fetch1d                         & $\mathbb{Z}$ & 1 &  2 & 1 & 0 & 1 & 1  &    1.0s  &  TO     &  83s & TO \\ 
        fetch1d gate                    & $\mathbb{Z}$ & 1 &  2 & 2 & 1 & 1 & 2  &    5.2s  &  TO     &  TO  & TO \\
        fetch2d                         & $\mathbb{Z}$ & 1 &  4 & 2 & 0 & 1 & 1  &    2.0s  &  TO     &  TO  & TO \\
        fetch2d stop signal             & $\mathbb{Z}$ & 1 &  4 & 3 & 1 & 1 & 1  &    4.5s  &  TO     &  TO  & TO \\
        fetch2d corner                  & $\mathbb{Z}$ & 1 &  4 & 2 & 0 & 1 & 1  &    6.7s  &  TO     &  TO  & TO \\
        fetch2d corner 2p               & $\mathbb{Z}$ & 2 &  4 & 2 & 0 & 1 & 2  &   21.0s  &  TO     &  TO  & TO \\
        fetch2d corner 2p v2            & $\mathbb{Z}$ & 2 &  4 & 2 & 0 & 1 & 2  &   34.9s  &  TO     &  TO  & TO \\
        fetch2d corner 3p               & $\mathbb{Z}$ & 3 &  4 & 2 & 0 & 1 & 4  &   42.3s  &  TO     &  TO  & TO \\
        fetch3d stop signal             & $\mathbb{Z}$ & 1 &  6 & 4 & 1 & 1 & 1  &  142s    &  TO     &  TO  & TO \\
        fetch4d stop signal             & $\mathbb{Z}$ & 1 &  8 & 5 & 1 & 1 & --  &  TO      &  TO     &  TO  & TO \\
        distribution 2c 1m 1e            & $\mathbb{Z}$ & 2 &  3 & 2 & 0 & 1 & 3  &   4.2s  &  TO     &  TO  & TO \\
        distribution 2c 1m 2e            & $\mathbb{Z}$ & 2 &  3 & 3 & 0 & 1 & 3  &   6.0s  &  TO     &  TO  & TO \\
        distribution 3c 1m 1e            & $\mathbb{Z}$ & 3 &  4 & 2 & 0 & 1 & 4  &   12.8s &  TO     &  TO  & TO \\
        distribution 3c 1m 3e            & $\mathbb{Z}$ & 3 &  4 & 4 & 0 & 1 & 4  &   18.1s &  TO     &  TO  & TO \\
        distribution 3c 2m 1e            & $\mathbb{Z}$ & 3 &  5 & 3 & 0 & 1 & 4  &   15.4s &  TO     &  TO  & TO \\
        distribution 3c 2m 3e           & $\mathbb{Z}$  & 3 &  5 & 5 & 0 & 1 & 4  &   21.0s &  TO     &  TO  & TO \\
        distribution 4c 1m 1e            & $\mathbb{Z}$ & 4 &  5 & 2 & 0 & 1 & 5  &   31.7s &  TO     &  TO  & TO \\
        distribution 4c 1m 4e            & $\mathbb{Z}$ & 4 &  5 & 5 & 0 & 1 & 5  &   56.2s &  TO     &  TO  & TO \\
        distribution 4c 2m 1e            & $\mathbb{Z}$ & 4 &  6 & 3 & 0 & 1 & 5  &   40.0s &  TO     &  TO  & TO \\
        distribution 4c 2m 4e            & $\mathbb{Z}$ & 4 &  6 & 6 & 0 & 1 & 5  &   57.9s &  TO     &  TO  & TO \\
        \bottomrule
    \end{tabular}
\end{table}

There are five groups of benchmarks.
\begin{enumerate}
    \item The first group (\emph{intro example} through \emph{intro example r.\ v2}) consists of variations of the running example, which is a parameterized version of a benchmark presented in \cite{DBLP:conf/isola/MaderbacherWB24}. Version v2  uses only one justice assumption requiring that infinitely often there is no wind.
We also include  versions with real-valued state and input variables, for which the environment can pick any real-valued wind disturbance in $[0,1]$ and %
the loading and unloading zones are changed to the intervals $[min, min+1]$ and $[max-1,max]$.

\item
The second group consists of benchmarks designed to highlight some special cases.
The \emph{minimal example} reduces the number of initialization choices the environment has to make.
It has only one parameter and the initial value of the state variable is fixed.
Benchmark \emph{distinct strategies} shows how our algorithm performs when the strategy is qualitatively different based on the parameter value.
The task is to move between a base at $0$ and a objective at $p$.
In this example there are three types of strategies for $p=0$, $p>0$, and $p<0$. 
For $p=0$ the agent does not move at all, for $p>0$ it has to move right to get to the objective and left to get back to the base, for $p<0$ this is flipped.
Our algorithm handles this using the generalization of dissimilar expressions.
In the benchmark \emph{win by assume violation}, one guarantee cannot be satisfied, but the system can win by ensuring that the assumption becomes violated and remains so.

\item The third category (\emph{single gate} through \emph{gate wind}) contains benchmarks with a logistics tasks similar to the running example, except that there is no wind disturbance in most of them. Instead, access to the loading and unloading zones is secured by gates that are controlled by the environment   --
a gate can only be passed if it is open.
We include versions with a gate only at one end, with gates at either end, and with two lanes with independently controlled gates. The benchmark \emph{gate wind} combines a gate with wind disturbances.

\item The fourth category (\emph{fetch1d} through \emph{fetch4d stop signal}) includes benchmarks where the system has to collect tokens in a grid world.
Once the system has collected a token, the environment chooses the position of the next token.
The system has to always remain inside an arena defined by the parameters.
This resembles the rpgsolve benchmark \emph{robot-commute} \cite{DBLP:journals/pacmpl/HeimD24},  with the added arena constraint.
In the variants with a \emph{stop signal} the environment can temporarily stall the system.
Our benchmark \emph{fetch2d stop signal} resembles the sweap benchmark \emph{taxi-service} \cite{azzopardi2025} with an added arena constraint.
If not otherwise specified, the arena constraints are box-shaped (each coordinate is between 0 and the parameter). The \emph{fetch2d corner} benchmarks use an L-shaped arena where the direct path between two points might be unsafe and the system has to use different strategies based on the relative position of the agent and the next token.

\item The last category (\emph{distribution 2c 1m 1e} through \emph{distribution 4c 2m 4e}) consists of resource distribution benchmarks where the system has to allocate resources to multiple consumers in a manufacturing pipeline to ensure its continued operation.
Parameters are used to specify how many resources each consumer requires.
The environment controls the provided resources and the efficiency of the manufacturing process.
Different variants have different numbers of consumers that need to be supplied (c), have one or two kinds of material (m), and either a single efficiency input or one for each consumer (e).
For benchmarks with multiple materials each consumer needs to be supplied with the material associated with it.

\end{enumerate}

\toolname solves most  benchmarks.
For \emph{gate wind} our algorithm cannot construct a ranking function for a substrategy that sometimes moves away from the target.
For \emph{fetch4d stop signal} the state space is so large that the timeout is reached before the first concrete system is built.
Sweap can solve two benchmarks quickly, but it fails for all others.
No results are listed for the real valued benchmarks as these are not supported by sweap.
In the \emph{minimal example} the environment only initializes a single variable, in all other benchmarks it initializes at least two variables (parameters and states).
The temporal specification in \emph{win by assume violation} is very different from the other benchmarks and this structure seems to align well with the algorithms used by sweap.
Issy fails to synthesize a system for any of our benchmarks. For \emph{fetch1d} it shows realizability using uninterpreted function acceleration.

\section{Related Work}
\subsection{Parameterized Synthesis}
Parameterized synthesis was previously studied for models that differ significantly from ours. Distributed system, in particular, consist of an arbitrarily large set of communicating instances of a processes. The parameter is the number of instances.  The instances may be fully identical, chosen from a small set of systems, or they may differ in simple ways like an identifier. Typically, messages are simple and  consist of one token or a one of a small set of messages. Data does not usually play a major role. Nevertheless, for such parameterized systems, even verification is  unrealizable for all but the simplest models \cite{DBLP:series/synthesis/2015Bloem}. 

For certain classes, however, if the system is correct for a certain \emph{cutoff} parameter, then it is correct for any parameter value.  Jacobs et al.\ \cite{DBLP:conf/tacas/JacobsB12,DBLP:conf/vmcai/KhalimovJB13} propose a synthesis procedure for such systems in which they synthesize a system for the cutoff size, which yields a system that is correct for any parameter size. (More precisely, they synthesize one system that is correct for any parameter value up to the cutoff.) Unfortunately, even synthesis of such small distributed systems is typically undecidable because of the information forks they contain \cite{DBLP:conf/lics/FinkbeinerS05}. Thus, the authors suggest a  semi-decision procedure for such systems based on bounded synthesis \cite{DBLP:journals/sttt/FinkbeinerS13}.

Lazic et al.\ \cite{DBLP:conf/opodis/Lazic0WB17} similarly focus on parameterized distributed systems, but  consider fault-tolerant systems. Threshold-guarded fault-tolerant protocols count the number of messages of a certain type. If a process has received a certain threshold number of messages, it can draw conclusions about the state of the distributed system, even if some messages may have been dropped or inserted by a malicious participant. The method takes a sketch of the algorithm  (like in program sketching \cite{DBLP:journals/sttt/Solar-Lezama13}) and attempts to fill in the correct thresholds depending on the total number of processes or the number of unreliable processes.

Markgraf et al.\ \cite{DBLP:conf/aplas/MarkgrafHLNN20} also consider systems with a parameterized number of components and ``holes'' that need to be filled in by the synthesis tool. The authors reduce the parameterized synthesis problem for safety properties to the problem of solving a regular safety game. Solving this game gives a general solution. The authors show that L${}^*$ can be used to learn the winning region of this game, which is a regular language.

In contrast to these approaches, we consider systems that do not consist of multiple communicating instances of an identical process, but rather of one process with parameters that may, for instance, determine the minimal and maximal values of variables. Data plays a more important role in our setting, although certain types of communicating processes can be simulated.

Azeem et al. recently studied policy synthesis for parameterized Markov decision processes using decision tree learning \cite{DBLP:conf/vmcai/AzeemCKKMMW25}. This paper is similar in spirit to our work in that it solves small instances of a problem and then attempts to generalize the solution to the full set of instances. An important difference is that they consider reachability properties on Markov Decision Processes, while we consider GR(1) properties on non-probabilistic systems. Furthermore, the concepts of parameterization and generalization are different and the resulting Markov chains are not verified but only tested.  %

\subsection{Infinite-State Synthesis}

The problem of parameterized synthesis is closely related to infinite state reactive synthesis, which has gained popularity in recent years.
Most existing approaches fall into one of two broad categories. First, several approaches use an abstraction/refinement approach. They abstract the infinite state synthesis problem to a Boolean (finite) linear temporal logic formula, which is then handled by existing LTL synthesis tools. 
The second class solves the fixpoint problem with theories directly.

\paragraph{Boolean abstraction}
Temporal stream logic (TSL) \cite{finkbeiner2019a} is a logic that allows specification and synthesis of reactive systems that operate on data.
It separates the reactive control flow from the data processing.
The reactive program is synthesized by creating a Boolean abstraction in the form of an LTL formula which is solved by an of-the-shelf LTL synthesis tool.
While TSL in its basic form takes operations on data as uninterpreted functions, it has been extended to support theories.

In \cite{DBLP:conf/fmcad/MaderbacherB22} an abstraction/refinement procedure is introduced for TSL modulo theories. The procedure analyzes counter strategies to identify new predicates that violate the theory constraints. These predicates are used to improve the LTL abstraction.

A different approach for TSL modulo theories has been studied in \cite{ChoiFPS22}. 
It uses a SyGuS solver to find sequences of actions that can be use to make predicates from the specification true. 
The LTL formula is then augmented with ``lemmas'' that describe the effect of these action sequences, thus partially instantiating the theory. 
A  limitation of this approach is that synthesized programs can only use guard conditions that are already present in the specification.

The method presented in \cite{azzopardi2025} operates on infinite-state arenas, in which  moves are decided by Boolean variables selected by both system and environment.
This differs from the work on TSL, because both players are restricted to a finite number of actions.
In TSL only the system is restricted this way and the environment has infinitely many choices.
Synthesis is performed in a counter example guided abstraction refinement loop.
An LTL abstraction is created and given to an LTL synthesis tool. If the LTL synthesis procedure finds a counterexample, it is analyzed to see whether it corresponds to a concrete counterexample.
Unbounded strategy loops are handled by performing liveness refinement.
Counterexample lassos are analyzed for spurious non termination. If applicable, a assumption to enforce termination is added to the abstraction.

In \cite{rodriguez2025cav} a synthesis method for LTL modulo theories with infinite inputs and outputs and relations over values in consecutive time steps is shown.
It works via a counter example guided refinement loop that learns temporal tautologies. However, these tautologies only use next operators and can thus not be used to synthesize systems where the number of steps depends on an arbitrarily large state value or parameter.

Other approaches  uses LTL modulo theory specifications with  infinite inputs and outputs, but with a finite state space. It should be noted that this problem is decidable, in contrast to the TSL problems described so far.
Rodriguez et al.\ give a sound and complete realizability checking algorithm for this problem \cite{DBLP:conf/cav/RodriguezS23, DBLP:journals/jlap/RodriguezS24}.
The approach computes an equirealizable LTL formula that can be solved by an LTL synthesis tool.
The LTL abstraction is fully computed ahead of time and thus no online interaction with an LTL synthesis tool is required.
This method has been extended to an adaptive synthesis method \cite{DBLP:conf/aaai/Rodriguez024} where an SMT solver is used at runtime to map between concrete input/output values and the abstract values used by the synthesized program.
A further extension uses functional synthesis to obtain a deterministic program by precomputing the mapping \cite{DBLP:conf/atva/RodriguezGS24}. %

\paragraph{Fixpoints}
The reactive synthesis problem can be modeled as a two player game between a system and an environment. 
In the propositional case, these games can be solved using fixpoint algorithms \cite{DBLP:journals/apal/McNaughton93,DBLP:journals/tcs/Zielonka98}.
There  are several papers that extend these algorithms can be extended to work on state representations that consist of theory formulas.
This approach has been developed for safety specifications \cite{SamuelDK21}, for Linear Temporal Logic \cite{DBLP:conf/kbse/SamuelDK23}, and GR(1) \cite{DBLP:conf/isola/MaderbacherWB24} games.

Infinite state systems can require unbounded loops in which the number of iteration to reach the next goal is not constant, but instead depends on the environment inputs and can be arbitrary large.
Fixpoint acceleration \cite{DBLP:journals/pacmpl/HeimD24} was develop to handle this problem for fixpoint based synthesis algorithms. 
Instead of performing the fixpoint computation one time step at a time, fixpoint acceleration constructs an acceleration lemma that allows a single step in the fixpoint computation to cover arbitrary many time steps.
The method was first developed for reactive program games and has been extended to handle localized attractors \cite{DBLP:conf/cav/SchmuckHDN24} and TSL translation to reactive program games \cite{DBLP:journals/pacmpl/HeimD25}.
Issy \cite{DBLP:journals/corr/abs-2502-03013} is a synthesis tool that implements these algorithms and works on a specification language that combines games and temporal logic.

\paragraph{Other Methods}
The following methods do not fall cleanly into the Boolean abstraction or fixpoint classes. First, \cite{KatisFGGBGW18} presents a synthesis method inspired by IC3/PDR for safety specifications given as assume-guarantee contracts.

Deductive synthesis methods have been proposed by \cite{BeyeneCPR14, DBLP:conf/aaai/FaellaP23}.
ConSynth \cite{BeyeneCPR14} solves safety, reachability, and LTL games by encoding them into forall-exists Horn clauses.
It requires strategy templates written specifically for each specification.
Faella and Parlato \cite{DBLP:conf/aaai/FaellaP23} propose a method for solving safety games by encoding them into constraint Horn clauses.

Finally, Neider and Markgraf propose to use decision tree learning to learn the winning region of an infinite safety game, where the learner is driven by an oracle that can state that (1) an initial state must be in the winning region, (2) a bad state cannot be, or (3) if a given proponent (opponent) state is in the winning region, then so is one (all) of its successors. The algorithm is able to learn small trees that represent the winning region \cite{NeiderT16, NeiderM19}.

\section{Conclusions}

We have presented the first approach to synthesize parameterized systems that do not consist of multiple copies of the same system. Our synthesis approach uses specifications written in GR(1) that may include the parameters. Our approach is based on a counter-example driven synthesis loop. In the loop,  we use a novel idea of generating systems for different parameter values and generalizing them to a single parameterized system using anti-unification and syntax-guided synthesis. It is not easy to prove such systems correct using existing tools. Instead, we  generalize the invariants of the individual systems to one candidate parameterized invariant, and we use syntax-guided synthesis to build a candidate ranking function. We check whether the candidate proofs are correct and refine if not. A preliminary empirical evaluation shows that our approach outperforms existing tools on a set of new simple benchmarks that we have adapted to work with  tools for infinite state synthesis.

As we have shown, synthesis and verification are undecidable. Thus, we rely on heuristics. In particular, we use a generalization heuristic, and syntax-guided synthesis is constrained by the grammars we employ. We expect that these heuristics can be further improved to increase performance and applicability.

We only support a restricted form of parameterization. For instance, the number of variables is fixed and independent of the parameters.  Thus, our approach cannot handle a version of the running example in which the grid world has $p$ dimensions. Removing this limitation would be a interesting extension. Likewise, the specification can include parameters, but has a fixed structure. It can not include a subformula like $\bigwedge_{k\in [0,p]} \eventually(x = k)$.

We believe that our approach can be extended to synthesis of infinite state systems that are \emph{not} parameterized. One idea would be to automatically restrict the range of variables and then to generalize the resulting systems. Inspection of the systems in \cref{sec:evaluation} for which we introduced such parameters, indicates that they are correct even in the unbounded case. In general, however, it is not clear how to constrain the assumptions to preserve realizability.

Finally, it would be interesting to extend our algorithm to a parameterized version of LTL. We may, for instance, attempt to  generalize winning strategies for the parity games that correspond to an LTL formula.

\clearpage

\bibliographystyle{ACM-Reference-Format}
\bibliography{bibliography.bib}

\end{document}